\documentclass[11pt]{article}

\usepackage[T1]{fontenc}
\usepackage[latin1]{inputenc}
\usepackage{mathtools,amsthm,amssymb}
\usepackage{mathabx}
\usepackage{mathrsfs}

\usepackage{graphicx}
\usepackage[round]{natbib}
\usepackage{amsmath,amssymb}
\usepackage[hyphens]{url}  %% be sure to specify the option 'hyphens'
\usepackage{float}

\usepackage{xcolor}
\usepackage{booktabs,tabularx,ragged2e}

\usepackage{booktabs,tabularx,ragged2e}
 % handy shortcut macro
\newcolumntype{C}{>{\Centering\arraybackslash}X} % centered version of "X" column type
\newcommand{\ra}[1]{\renewcommand{\arraystretch}{#1}}

\usepackage{authblk}

\usepackage{hyperref}

\usepackage{algorithm2e}% http://ctan.org/pkg/algorithm2e
\usepackage{algorithmic}
\makeatletter
\renewcommand{\@algocf@capt@plain}{above}% formerly {bottom}
\makeatother

\usepackage{setspace}
\usepackage{xcolor}

\newcommand{\revision}[1]{#1}

\usepackage[margin=1in]{geometry}

\newtheorem{proposition}{Proposition}[section]

\newtheorem{definition}[proposition]{Definition}
\newtheorem{example}[proposition]{Example}

\newcommand{\dif}{\mathrm{d}}

\numberwithin{equation}{section}

\begin{document}

\title{Non-reversible guided Metropolis kernel}

\author[$1$]{Kengo Kamatani}
\author[$1$]{Xiaolin Song}
\affil[$1$]{{\small Osaka University}}
\date{}

\maketitle

\begin{abstract}
We construct a class of non-reversible Metropolis kernels as a multivariate extension of the guided-walk kernel proposed by \citet{Gustafson}. The main idea of our method is to introduce a projection that maps a state space to a totally ordered group. By using Haar measure, we construct a novel Markov kernel termed Haar-mixture kernel, which is of interest in \revision{its} own right. This is achieved by inducing a topological structure to the totally ordered group. Our proposed method, the $\Delta$-guided Metropolis--Haar kernel,  is constructed by using the Haar-mixture kernel as a proposal kernel. The proposed non-reversible kernel \revision{is} at least 
$10$ times better than the random-walk Metropolis kernel and Hamiltonian Monte Carlo kernel for the logistic regression and a discretely observed stochastic process in terms of effective sample size per second. 
\end{abstract}

\section{Introduction}
\subsection{Non-reversible Metropolis kernel}
\label{subsec:non-reversible}

Markov chain Monte Carlo methods have become essential tools in Bayesian computation.  Bayesian statistics \revision{has} been strongly influenced by the evolution of the methods. This influence \revision{is} well expressed in \citet{Robert_2011,Green_2015}. However, the applicability of traditional Markov chain Monte Carlo methods is limited for some statistical problems involving large data sets. This motivated researchers to work on new kinds of Monte Carlo methods, such as piecewise deterministic Monte Carlo methods \citep{MR3832232,MR3911113}, divide-and-conquer methods \citep{wang2013parallelizing,neiswanger,Scott_2016}, approximate subsampling methods \citep{Welling2011,machenfox}, and   non-reversible Markov chain Monte Carlo methods. 

In this paper, we focus on non-reversible Markov chain Monte Carlo methods. Reversibility refers to the sophisticated balancing condition (detailed-balance condition) which makes the Markov kernel invariant with respect to the probability measure of interest. Although reversible Markov kernels form a nice class  \citep{MR834478,MR1448322,MR1915532,Kontoyiannis_2011}, the condition is not necessary for the invariance.  Breaking reversibility sometimes improves the convergence properties of Markov chains \citep{edsjaa.10.2307.295968519930101,MR1789978,andrieu2019peskuntierney}. 

However, without the sophisticated balancing condition, constructing \revision{a} Markov chain Monte Carlo method is not an easy task. There are many efforts working in this direction but still there are large gap\revision{s} between the theory and practice. The guided-walk method for probability measure\revision{s} on one-dimension Euclidean space was proposed by \citet{Gustafson} which sheds light on this direction. Its multivariate extension has also \revision{been} studied in \citet{MR3905547} but \revision{is} still based on \revision{a} one dimensional Markov kernel. In this paper we consider a general multivariate extension of \citet{Gustafson}, termed guided Metropolis kernel.  To do this, we first briefly describe their method. 

\revision{
In the algorithm proposed in \citet{Gustafson}, a direction variable is attached to each state $x\in\mathbb{R}$, which is either the positive $(+)$ direction or the negative $(-)$ direction. If the positive direction is attached, the new proposed state is
\begin{equation}
x+|w|
\label{eq:sum}
\end{equation}
where $x$ is the current value and $w$ is the random noise. If the negative direction is attached, the new proposed state is
\begin{equation*}
x-|w|. 
\end{equation*}
The proposed state is accepted as the new state with the so-called acceptance probability. 
If the proposed state is accepted, the new state is assigned the same direction as the previous state. Otherwise, the opposite direction is assigned to the new state, and the new state is same as the previous state. 
}

\revision{
If we want to generalise this procedure to a more general state space, say $E$, we may need to interpret the summation operator $+$ in (\ref{eq:sum}) differently, since, for example, $\mathbb{R}_+$ is not closed with the operation. So we have to find a state space that has a suitable summation operator, in other words,  a group structure. For this reason, we consider an abstract setting throughout in this paper, as this is the most natural way to describe our setting and algorithms.  
}

More precisely, the main idea of our method is to introduce a projection which maps \revision{a} state space $E$ to  a totally ordered group. By this ordering we will decompose any Markov kernel into a sum of positive ($+$) and negative ($-$) directional sub Markov kernels. By using rejection sampling, two sub Markov kernels are normalised to be positive and negative Markov kernels. Then we can construct a non-reversible Markov kernel on  $E\times\{-,+\}$  by \revision{the} systematic-scan Gibbs sampler. Similar idea\revision{s} can be found in \citet{Gagnon2020LiftedSF} for a discrete state space case. 

Usually, total masses of sub Markov kernels are quite different which results in inefficiency of rejection sampling. To avoid this issue, we focus on the case where the total masses are the same. 
However,  it is non trivial to find such a Markov kernel. 
By using Haar measure, we introduce a novel Markov kernel termed Haar-mixture kernel, that  \revision{has this} property. This is achieved by introducing \revision{a} topological structure to the totally ordered group and $E$. 
Our proposed method, 
the $\Delta$-guided Metropolis--Haar kernel, is constructed by using the Haar-mixture kernel as a proposal kernel. By using this, we introduce many non-reversible $\Delta$-guided Metropolis--Haar kernels which are of practical interest.
\subsection{Literature review}
Here we briefly review the existing literature which has studied non-reversible Markov kernels that modify reversible Metropolis kernels. 
 First of all, products of reversible Markov kernels are not reversible in general. For example, the systematic-scan Gibbs sampler is usually non-reversible. 
 
The so-called lifting method was considered in, for example, 
 \citet{MR1789978, TURITSYN2011410, Vucelja_2016,Gagnon2020LiftedSF}. In this method, a Markov kernel is lifted to an augmented state space by splitting the Markov kernel into two sub-Markov kernels.   An \revision{auxiliary variable} chooses which kernel should be followed. The guided-walk kernel \citep{Gustafson} and the method we are proposing are classified into this category.  Another approach is preparing two Markov kernels in advance and constructing a systematic-scan Gibbs sampler  as in \citet{MR3905547}. 
 
The Hamiltonian Monte Carlo kernel has an \revision{auxiliary variable} by construction. Therefore, a systematic-scan Gibbs sampler can naturally be defined, as in \citet{HOROWITZ1991247}. Also, \citet{pmlr-v70-tripuraneni17a} constructed a different non-reversible kernel which twists the original Hamiltonian Monte Carlo kernel. See also \citet{sherlock2017discrete,ludkin2019hug}. 
 
An important exception that does not introduce an \revision{auxiliary variable} is  \citet{MR3538633} that introduces an anti-symmetric part into the acceptance probability so that the kernel becomes non-reversible while preserving $\Pi$-invariance, \revision{where a Markov kernel $P$ is called $\Pi$-invariant if $\int_{x\in E}\Pi(\dif x)P(x,A)=\Pi(A)$}. See also \citet{neal2020nonreversibly} that avoids requiring an additional \revision{auxiliary variable} by focusing on the uniform distribution that is implicitly used for the acceptance-rejection procedure in the Metropolis algorithm. 

\revision{
In this paper, non-reversible Markov kernels are designed using the Haar measure. The use of the Haar measure in the Monte Carlo context is not new. \citet{LiuWu99} used the Haar measure to improve the convergence speed of the Gibbs sampler, which was further developed by \citet{LiuSabattiB00, HobertMarchev08}. Also, the Haar measure is a popular choice of prior distribution in the Bayesian context \citep{MR1234489,RCP,MR2247439}. Markov chain Monte Carlo methods with models using the prior distribution are naturally related to the Haar measure. 
}

\subsection{Construction of the paper}

\revision{
The main objective of this paper is to present a framework for the construction of a class of non-reversible kernels, which are described in Section \ref{sec:guided}. Sections \ref{sec:haar-mixture} and \ref{sec:unbiased} are devoted to introducing some useful ideas for the construction of the non-reversible kernels. }

\revision{
Section \ref{subsec:reversibilty} contains an introduction to some reversible kernels, such as the convolution-type construction of reversible kernels and Metropolis kernels. In Section \ref{subsec:haar-mixture}, we introduce the Haar-mixture kernel and the Metropolis--Haar kernel. The Metropolis--Haar kernel is useful in its own right, although it does not have non-reversible property. Moreover, it is actually a key Markov kernel for non-reversible kernels. However, the connection to non-reversible kernels is explained in Section \ref{sec:unbiased} rather than Section \ref{sec:haar-mixture}. 
}

\revision{
In Section \ref{sec:unbiased} we introduce three properties, unbiasedness, random-walk, and sufficiency properties.} 
%The unbiasedness of the Markov kernel $Q$ is an important property for efficient construction of non-reversible kernel, and the random-walk property of $Q$ is a sufficient condition for unbiasedness. Moreover, the sufficiency property is a key concept for the random-walk property.
\revision{These properties are introduced from Section \ref{subsec:unbiasedness} to Section \ref{subsec:sufficiency} sequentially.  As described in Section \ref{subsec:non-reversible}, our construction of the non-reversible kernel is based on a Markov kernel that generates a state in the positive and negative directions with equal probability. This property is referred to as unbiasedness in Section \ref{subsec:unbiasedness} which is the sufficient condition for constructing non-reversible kernels. In Section \ref{subsec:random-walk}, we introduced a more specific form of the unbiasedness property, the random-walk property. In Section \ref{subsec:sufficiency}, we introduce the sufficiency property to describe a specific form of the random-walk property using the Haar-mixture kernel introduced in Section \ref{subsec:haar-mixture}. Section \ref{subsec:multivariate} describes how to generalise a one-dimensional unbiased kernel to a multivariate kernel. 
}

\revision{
Section \ref{sec:guided} is the section for non-reversible kernels. In Section \ref{subsec:delta-guided} we introduce a class of non-reversible kernels, the $\Delta$-guided Metropolis kernel. We focus on the $\Delta$-guided Metropolis--Haar kernel, which is a $\Delta$-guided Metropolis kernel using Haar-mixture kernel. In Section \ref{subsec:step-by-step}, we show step-by-step instructions for constructing $\Delta$-guided Metropolis--Haar kernels. Some examples can be found in Section \ref{subsec:example}.
}

\revision{
In Section \ref{sec:multiplicative}, some simulations for the $\Delta$-guided Metropolis--Haar kernel based on the autoregressive kernel are studied. Also, numerical analyses for $\Delta$-guided Metropolis--Haar kernels on $\mathbb{R}_+^d$ are studied in Section \ref{sec:additive}. Some conclusions and discussion can be found in Section \ref{sec:discussion}. 
}

\subsection{Some group related concepts}

\revision{
A set $G$ is a \textit{totally ordered set} if it has a binary relation $\le$ which satisfies three properties: 
\begin{itemize}
    \item $a\le b$ and $b\le a$ implies $a=b$,
    \item  if $a\le b$ and $b\le c$, then $a\le c$,
    \item $a\le b$ or $b\le a$ for all $a, b\in G$.
\end{itemize}
%We write $a<b$ if $a\le b$ and $a\neq b$. 
We call $\le$ an order relation. The totally ordered set $G$ can be equipped with the order topology induced by $\{g\in G:g\le a\}$ and $\{g\in G:a\le g\}$ for $a\in G$. 
A Borel $\sigma$-algebra is generated from the order topology. 
}

\revision{
A group $(G,\times)$ is an \textit{ordered group} if there is an order relation $\le$ such that 
\begin{equation}
a\le b \Longrightarrow ca\le cb\ \mathrm{and}\ ac\le bc
\label{eq:axiom-group-ordering}
\end{equation}
for $a,b,c\in G$. 
}

\revision{
A group $(G,\times)$ with a topology on $G$ is called a \textit{topological group} if its group actions $(g,h)\mapsto gh$ and $g\mapsto g^{-1}$ are continuous.  
If $G$ is locally compact and Hausdorff, it is called a \textit{locally compact topological group}. For any locally compact topological group, there is a left and right Haar measures. The group is called \textit{unimodular}  if the left Haar measure and the right Haar measure coincide up to a multiplicative constant. See \cite{MR0033869} for the detail. 
}

\revision{
The set $E$ is a \textit{left $G$-set}, if there exists a left-group action $(g,x)\mapsto gx$ from $G\times E$ to $E$ such that $(e,x)=x$ and $(g,(h,x))=(gh,x)$ where $e$ is the identity and $g, h\in G, x\in E$. We denote $gx$ for $(g,x)$. 
In this paper, any map $\Delta: E\rightarrow G$ is called a statistic when $G$ is a totally ordered set. A statistic is called a $G$-statistic if  $\Delta gx=g\Delta x$ for $g\in G$ and $x\in E$ and if $G$ is an ordered group. 
}

\section{Haar-mixture kernel}
\label{sec:haar-mixture}

\subsection{Reversibility and Metropolis kernel}
\label{subsec:reversibilty}

% We briefly review ergodicity of Markov kernel $P$ on a measurable space $(E,\mathcal{E})$. See \cite{MT} for further reference. 
% Suppose that $\mathcal{E}$ is countably generated and countably separated. The total variation distance $
% \|\mu-\nu\|_{\mathrm{TV}}
% $ which
% measures discrepancy between probability measures $\mu$ and $\nu$ on $(E,\mathcal{E})$. In this paper, Markov kernel $P$ is called ergodic if 
% $P^N(x,\cdot)$ converges to a probability measure $\Pi$  in total variation distance
% in $\Pi$-almost surely. If $P$ is ergodic, it is $\Pi$-invariant. 
% Markov kernel $P$ is called uniformly ergodic if 
% \begin{equation}
% \label{eq:unif_ergodic}
% \sup_{x\in E}\|P^N(x,\cdot)-\Pi\|_{\mathrm{TV}}<1
% \end{equation}
% for some probability measure $\Pi$ on $(E,\mathcal{E})$ and $N\in\mathbb{N}$. 
% Also, if
% \[
% \sum_{n=1}^\infty r^n\|P^n(x,\cdot)-\Pi\|_{\mathrm{TV}} 
% \]
% is $\Pi$-integrable for some probability measure $\Pi$ and $r>1$, then we call $P$ geometrically ergodic. If $P$ is uniformly ergodic, then it is geometrically ergodic. 

Before analysing the non-reversible Markov kernel, we first recall the definition of reversibility. Reversibility is important throughout the paper since our construction of a non-reversible Markov kernel is based on class\revision{es} of reversible Markov kernels. 
A Markov kernel $Q$ on a measurable space $(E,\mathcal{E})$ is $\mu$-reversible  for a $\sigma$-finite measure $\mu$ if
\begin{equation}\label{eq:reversible}
\int_A\mu(\dif x)Q(x,B)=\int_B\mu(\dif x)Q(x,A)
\end{equation}
for any $A, B\in\mathcal{E}$. If $Q$ is $\mu$-reversible, then $Q$ is $\mu$-invariant. There is a strong connection between ergodicity and $\mu$-reversibility. %For example, geometric ergodicity is equivalent to existence of spectral gap under suitable condition. See \cite{MR1448322,MR1915532}. 
See \citet{MR834478,MR1448322,MR1915532,Kontoyiannis_2011}.

\revision{
As we mentioned above, our non-reversible Markov kernel is based on a class of reversible kernels. 
Suppose that $\mu$ is a probability measure on $(E,\mathcal{E})$ where $E$ is closed by a summation operator. A simple approach to construct a reversible kernel is to first describe $\mu$ as an image measure  of a convolution of probability measures $\mu_Y, \mu_Z$ under a measurable map $f$, i.e., $\mu=(\mu_X*\mu_Y)\circ f^{-1}$. Here, an image measure of a measure $\mu$ under a map $f:E\rightarrow E$ is defined by 
$$
\mu\circ f^{-1}(A)=\mu(\{x\in E: f(x)\in A\}), 
$$
and a convolution of $\mu_1$ and $\mu_2$ is defined by 
$$
(\mu_1*\mu_2)(A)=\int_E\mu_1(A-x)\mu_2(\dif x)
$$
where $A-x=\{y\in E: x+y\in A\}$. 
Then define independent random variables 
$Y_1, Y_2\sim\mu_Y$ and $Z\sim \mu_Z$. Finally, construct $Q$ as the conditional distribution of  $X_2=f(Y_2+Z)$ given $X_1=f(Y_1+Z)$}. Then the probabilities in (\ref{eq:reversible}) are $\mathbb{P}(X_1\in A, X_2\in B)$ and 
$\mathbb{P}(X_1\in B, X_2\in A)$ which are the same by construction.  
% \ref{ex:thinnedgamma} and \ref{ex:ar-intro} follow this approach, and construction of Example \ref{ex:chisq} is based on Example \ref{ex:ar-intro}. 
We refer to this as the convolution-type construction. 

Let $\mathbb{R}_+=(0,\infty)$. 
Let $I_d$ be the $d\times d$-identity matrix. 

\begin{example}[Autoregressive kernel]
\label{ex:ar-intro}
We first describe the well-known autoregressive kernel resulting from the above convolution-type construction.  
Let $\rho\in (0,1]$ and $M$ be a $d\times d$ positive definite symmetric matrix, and let $x_0\in\mathbb{R}^d$. 
\revision{
Further, let $\mathcal{N}_d(x, M)$ be the normal distribution with mean $x\in\mathbb{R}^d$ and covariance matrix $M$. 
By the reproductive property of the normal distribution, $\mu=\mathcal{N}_d(x_0, M)$ is a convolution of probability measures $\mu_Y=\mathcal{N}_d(0,\rho M)$ and $\mu_Z=\mathcal{N}_d(0,(1-\rho) M)$ with $f(x)=x_0+x$ in the notation above. 
Then the random variable $X_1$ and $X_2$ in the above notation follow $\mu$ with covariance 
$$
\operatorname{Cov}(X_1,X_2)=\operatorname{Var}(Z)=(1-\rho)M. 
$$
By the change-of-variables formula, the conditional distribution $Q(x,\cdot)=\mathbb{P}(X_2\in\cdot\ |X_1=x)$ is the autoregressive kernel, which is defined as 
\[
Q(x,\cdot)=\mathcal{N}_d(x_0+(1-\rho)^{1/2}~(x-x_0), \rho M). 
\]
Due to the nature of convolution, it is $\mu=\mathcal{N}_d(x_0, M)$-reversible. 
}
\end{example}

\begin{example}[Beta-Gamma kernel]
\label{ex:thinnedgamma-intro}
Let $\mathcal{G}(\nu,\alpha)$ be the Gamma distribution with shape parameter $\nu$ and rate parameter $\alpha$. 
Let $\mu=\mathcal{G}(k,1)$, $\mu_Y=\mathcal{G}(k(1-\rho), 1)$ and $\mu_Z=\mathcal{G}(k\rho,1)$ and $f(x)=x$ where $k\in\mathbb{R}_+$ and $\rho\in (0,1)$. The conditional distribution of $b:=Z/X_1$ given $X_1$ in the notation above, is $\mathcal{B}e(k\rho,k(1-\rho))$, where $\mathcal{B}e(\alpha,\beta)$ is the Beta distribution with shape parameters $\alpha$ and $\beta$. Therefore, the conditional distribution $Q(x,\dif y)=\mathbb{P}(X_2\in\dif y|X_1=x)$ on $E=\mathbb{R}_+$, called
Beta-Gamma (autoregressive) kernel in this paper, is given by 
\[
y = b x+ c,\ b\sim\mathcal{B}e(k\rho,k(1-\rho)),\  c\sim\mathcal{G}(k(1-\rho),1),
\]
where $b, c$ are independent, and $c$ corresponds to $Y_2$ in the above notation.  
The kernel is $\mu=\mathcal{G}(k,1)$-reversible by construction. See \citet{MR986584}. 
\end{example}

\begin{example}[Chi-squared kernel]\label{ex:chisq}
We construct a $\mu=\mathcal{G}(L/2,1/2)$-reversible kernel for
$L\in\mathbb{N}$. 
Let $\mu_Y=\mathcal{N}_L(0,\rho I_L)$ and $\mu_Z=\mathcal{N}_L(0,(1-\rho) I_L)$ and $f(x_1,\ldots, x_L)=\sum_{l=1}^Lx_l^2$. By the reproductive property, if $Y_1, Y_2\sim\mu_Y$ and $Z\sim\mu_Z$ then $X_i':=Y_i+Z\sim\mathcal{N}_L(0, I_L)$.
Therefore, $X_i=f(X_i')\sim\mu$ since $\mu$ is the Chi-squared distribution with $L$-degrees of freedom. 
The conditional distribution $Q(x,\mathrm{d}y)=\mathbb{P}(X_2\in\mathrm{d}y|X_1=x)$ is $\mu$-reversible by construction. 
We show that the conditional distribution is given by 
\begin{equation}
y = \left[\left\{(1-\rho)~x\right\}^{1/2}+\rho^{1/2}~w_1\right]^2+\sum_{l=2}^L\rho~w_l^2,
\label{eq:non-central-chi-squared}
\end{equation}
where $w_1,\ldots, w_L$ are independent and follow the standard normal distribution. 
To see this, first note that the law of $\rho^{-1/2}X_2'$ given $X_1'=x'$ is $\mathcal{N}_L(\rho^{-1/2}(1-\rho)^{1/2}x',I_L)$. Then 
the law of $\rho^{-1}X_2=f(\rho^{-1/2}X_2')$ given $X_1'=x'$ is the non-central Chi-squared distribution with $L$-degrees of freedom and the non-central parameter $f(\rho^{-1/2}(1-\rho)^{1/2}x')=\rho^{-1}(1-\rho)x$. The expression (\ref{eq:non-central-chi-squared}) follows from the property of the non-central Chi-squared distribution. 
%This kernel is also considered to be a skeleton chain of Cox--Ingersoll--Ross process. 
\end{example}

The Metropolis algorithm is a clever way to construct a reversible Markov kernel with respect to a given probability measure, $\Pi$. The following definition is somewhat broader than the usual one. It even includes the independent Metropolis--Hastings kernel, which is usually classified as a Metropolis--Hastings kernel and not a Metropolis kernel. An important feature of this kernel compared to the more general Metropolis--Hastings kernel is that we do not need to know the explicit density function of the proposed Markov kernel $Q(x,\cdot)$. 

\begin{definition}[Metropolis kernel]
\label{def:mh}
Let $\mu$ be a measure, and let $\Pi$ be a probability measure with probability density function $\pi(x)$ respect to $\mu$. Let $Q$ be a $\mu$-reversible Markov kernel. 
A Markov kernel $P$ is called a Metropolis kernel of $(Q,\Pi)$ if 
% \[
% P(x,\dif y)=Q(x,\dif y)\alpha(x,y)+\delta_x(\dif y)\left\{1-\int_EQ(x,\dif y)\alpha(x,y)\right\}. 
% \]
\begin{equation}
\begin{split}
P(x,\dif y)&=Q(x,\dif y)\alpha(x,y)\\
&\quad+\delta_x(\dif y)\left\{1-\int_EQ(x,\dif y)\alpha(x,y)\right\}
\end{split}
\nonumber
\end{equation}
for 
\begin{equation}\label{eq:acceptance}
    \alpha(x,y)=\min\left\{1, \frac{\pi(y)}{\pi(x)}\right\}. 
\end{equation}
The function $\alpha$ is called the acceptance probability, and Markov kernel $Q$ is called the proposal kernel. 
\end{definition}

A Metropolis kernel $P$ is $\Pi$-reversible.  It is easy to create a Metropolis version of the proposal kernels presented in Examples \ref{ex:ar-intro}-\ref{ex:chisq}. 

\subsection{Haar-mixture kernel}
\label{subsec:haar-mixture}
 
We introduce Markov kernels using the Haar measure. The Haar measure enables us to construct a random walk on a locally compact topological group, which is a crucial step towards obtaining non-reversible Markov kernels in this paper. The connection between the Markov kernels and the random walk will be made clear in Section \ref{sec:unbiased}, and the connection with non-reversible Markov kenrels will be clear in Section \ref{sec:guided}. 

The idea of constructing Haar-mixture kernels is to introduce an auxiliary variable $g$ corresponding to the scaling parameter or the shift parameter of the state space. We set a prior distribution on $g$. In each iteration of the random number generation, the parameter $g$ is generated from the conditional distribution given the state space using the prior distribution. The Haar-mixture kernel uses the Haar measure for the prior distribution of $g$. 
As commented above, the reason for using the Haar measure will be made clear in later sections. 

Let $(G,\times)$ be a locally compact topological group equipped with the Borel $\sigma$-algebra. Let $E$ be a left $G$-set. We assume that $E$ is equipped with a $\sigma$-algebra $\mathcal{E}$ and the left-group action is jointly measurable. Let $Q$ be a $\mu$-reversible Markov kernel on $(E,\mathcal{E})$,
where $\mu$ is a $\sigma$-finite measure. Let
\[
Q_g(x,A)=Q(gx, gA)\ (x\in E, A\in\mathcal{E}, g\in G)
\]
where $gA=\{gx: x\in A\}\in\mathcal{E}$. 
Then $Q_g$ is $\mu_g$-reversible where 
\[
\mu_g(A)=\mu(gA). 
\]
Let $\nu$ be the right Haar measure on $G$. It satisfies $\nu(Hg)=\nu(H)$ and where $Hg=\{hg:h\in H\}\subset G$. Set
%\song{'A' was abuse used here, sometimes $A$ was a subset of $E$, sometimes it indicates a subset of $G$, it may be better to distinguish the subset of $G$ and $E$, even in other place.}
\begin{equation}\label{eq:mu_star}
\mu_*(A)=\int_{g\in G}\mu_g(A)\nu(\dif g)\ (A\in\mathcal{E}). 
\end{equation}
Assume that $\mu_*$ is a $\sigma$-finite measure. 
Then $\mu_*$ is a left-invariant measure. Indeed,
\begin{align*}
    \mu_*(aA)&=\int_{b\in G}\mu_b(aA)\nu(\dif b)\\
&=\int_{b\in G}\mu(baA)\nu(\dif b)\\
&=\int_{b\in G}\mu(bA)\nu(\dif b)\\
&=\mu_*(A). 
\end{align*}
% \[
% \mu_*(aA)=\int_{b\in G}\mu_b(aA)\nu(\dif b)
% =\int_{b\in G}\mu(baA)\nu(\dif b)
% =\int_{b\in G}\mu(bA)\nu(\dif b)=\mu_*(A). 
% \]
% Moreover, the $\Delta$-push forward measure $\widehat{\mu}_*$ of $\mu_*$ is the left Haar measure on $G$ since
% \[
% \widehat{\mu}_*(gA)=\mu_*(\{x:\Delta x\in gA\})
% =
% \mu_*(\{x:\Delta g^{-1}x\in A\})
% =\mu_*(\{x:\Delta x\in A\})=\widehat{\mu}_*(A). 
% \]
Suppose that $\mu$ is absolutely continuous with respect to $\mu_*$. Then $(g,x)\mapsto \dif\mu_g/\dif\mu_*(x)$ is jointly measurable. This is because
$\dif\mu_g/\dif\mu_*(x)=\dif\mu/\dif\mu_*(gx)$ by the left-invariance of $\mu_*$, and $(g,x)\mapsto gx$ is assumed to be jointly measurable. 
Let
\begin{equation}
K(x,\dif g)=\left.\frac{\dif\mu_{g}}{\dif\mu_*}(x)\nu(\dif g)~\right. 
\label{eq:markov-kernel-k}
\end{equation}
%if $K(x,G)=1$, and let $K(x,\dif g)=\delta_e(\dif g)$ otherwise. 
By the Radon--Nikod\'ym theorem, $K(x,G)=1$ $\mu_*$-almost surely. 
Define
\begin{equation}\label{eq:haar_mixture}
Q_*(x, A)=\int_{g\in G}K(x,\dif g)Q_g(x, A). 
\end{equation}

\begin{definition}[Haar-mixture kernel]
The Markov kernel $Q_*$ defined by (\ref{eq:haar_mixture}) is called the Haar-mixture kernel of $Q$. 
\end{definition}

\begin{example}[Autoregressive mixture kernel]
\label{ex:ar-mid}
Consider the autoregressive kernel in Example \ref{ex:ar-intro}. 
Let $E=\mathbb{R}^d$ and $G=(\mathbb{R}_+,\times)$, and set $(g,x)\mapsto x_0+g^{1/2}(x-x_0)$. Then the Haar measure is $\nu(\dif g)\propto g^{-1}\dif g$. A simple calculation yields $\mu_g=\mathcal{N}_d(x_0, g^{-1}M)$ and  
$Q_g(x,\cdot)=\mathcal{N}_d(x_0+(1-\rho)^{1/2}~(x-x_0), g^{-1}\rho M)$. Also, 
$\mu_*(\dif x)\propto (\Delta x)^{-d/2}\dif x$ and
$K(x,\dif g)=\mathcal{G}(d/2, \Delta x/2)$
where $\Delta x =(x-x_0)^{\top}M^{-1}(x-x_0)$. 
We have a closed form (up to a constant) of expression of $Q_*(x,\cdot)$ as follows: 
\begin{align*}
    Q_*(x,\mathrm{d}y)\propto \left[1+\frac{\Delta(y-(1-\rho)^{1/2}(x-x_0))}{\rho\Delta x}\right]^{-d}\dif x. 
\end{align*}
\end{example}

\begin{example}[Beta-Gamma mixture kernel]
\label{ex:thinnedgamma}
For the Beta-Gamma kernel in Example \ref{ex:thinnedgamma-intro}, we introduce
 \revision{the} operation $(g,x)\mapsto gx$ with  $G=(\mathbb{R}_+,\times)$. By this operation, $E=\mathbb{R}_+$ is a left $G$-set. 
We have $\mu_g=\mathcal{G}(k,g)$, and the Markov kernel $Q_g$ is the same as $Q$ replacing
$c\sim\mathcal{G}(k(1-\rho),1)$ by 
$c\sim\mathcal{G}(k(1-\rho),g)$.  
The Haar measure on $G$ is $\nu(\dif g)\propto g^{-1}\dif g$,  and hence  $\mu_*(\dif x)\propto x^{-1}\dif x$ and  $K(x,\dif g)=\mathcal{G}(k, x)$. 
%We can construct a Haar-mixture kernel $Q_*$ with Haar measure $\nu(\dif g)=g^{-1}\dif g$ on $\mathbb{R}_+$. In this case, $\mu_g=\mathcal{G}(k,g)$ and $\mu_*(\dif x)=x^{-1}\dif x$. Hence $K(x,\dif g)=\mathcal{G}(k, x)$. 
\end{example}

\begin{example}[Chi-squared mixture kernel]
\label{ex:chisq-mix}
For the Chi-squared kernel in Example \ref{ex:chisq}, let $E=\mathbb{R}_+$, $G=(\mathbb{R}_+,\times)$ and set  $(g,x)\mapsto gx$. 
We have $\mu_g=\mathcal{G}(L/2, g/2)$, and the Markov kernel $Q_g$ is the same as $Q$ replacing the standard normal distribution by $\mathcal{N}(0,g^{-1})$.  
The Haar measure is $\nu(\dif g)\propto g^{-1}\dif g$. 
In this case, $K(x,\dif g)=\mathcal{G}(L/2,  x/2)$, and $\mu_*(\dif x)=x^{-1}\dif x$.
\end{example}

\begin{proposition}
The Haar-mixture kernel $Q_*$ is $\mu_*$-reversible. 
\end{proposition}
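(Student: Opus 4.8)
The plan is to unwind the definition of $Q_*$, swap the order of integration, absorb the Radon--Nikod\'ym derivative into the base measure, and then invoke the $\mu_g$-reversibility of each $Q_g$. Fix $A,B\in\mathcal{E}$. Combining (\ref{eq:haar_mixture}) with (\ref{eq:markov-kernel-k}),
\[
\int_A\mu_*(\dif x)\,Q_*(x,B)=\int_A\mu_*(\dif x)\int_{g\in G}\frac{\dif\mu_g}{\dif\mu_*}(x)\,\nu(\dif g)\,Q_g(x,B).
\]
The integrand is nonnegative, so Tonelli's theorem lets us interchange the $\mu_*(\dif x)$-integral and the $\nu(\dif g)$-integral, giving
\[
\int_{g\in G}\nu(\dif g)\int_A\frac{\dif\mu_g}{\dif\mu_*}(x)\,Q_g(x,B)\,\mu_*(\dif x).
\]

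Next, since $\mu_g\ll\mu_*$ with Radon--Nikod\'ym derivative $\dif\mu_g/\dif\mu_*$ --- a fact recorded in the excerpt, where $\dif\mu_g/\dif\mu_*(x)=\dif\mu/\dif\mu_*(gx)$ follows from the left-invariance of $\mu_*$ together with the assumed absolute continuity of $\mu$ --- the inner integral equals $\int_A Q_g(x,B)\,\mu_g(\dif x)$. Each $Q_g$ is $\mu_g$-reversible (observed just after the definition of $Q_g$, as a consequence of the $\mu$-reversibility of $Q$), so by (\ref{eq:reversible}) it holds that $\int_A Q_g(x,B)\,\mu_g(\dif x)=\int_B Q_g(x,A)\,\mu_g(\dif x)$. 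Substituting this in and then running the same three steps backwards with $A$ and $B$ exchanged yields $\int_A\mu_*(\dif x)\,Q_*(x,B)=\int_B\mu_*(\dif x)\,Q_*(x,A)$, which is the asserted $\mu_*$-reversibility.

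The argument is essentially immediate; the points needing care are purely bookkeeping. First, $K(x,\cdot)$ is a probability measure only for $\mu_*$-almost every $x$, so $Q_*$ is determined only up to a $\mu_*$-null set; this is harmless since reversibility is an identity of integrals against $\mu_*$ on both sides, so the values of $Q_*$ on an exceptional null set do not matter. Second, Tonelli's theorem requires joint measurability of $(g,x)\mapsto Q_g(x,B)$; this follows from the joint measurability of the group action, that of $(g,x)\mapsto\dif\mu_g/\dif\mu_*(x)$ already noted in the excerpt, and the measurability of $x\mapsto Q(x,B)$, using the $\sigma$-finiteness of $\mu_*$. With these remarks the computation above completes the proof, and I do not anticipate any genuine obstacle beyond these measure-theoretic formalities.
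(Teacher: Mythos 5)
Your proof is correct and follows essentially the same route as the paper's: expand $Q_*$ via $K(x,\dif g)=\frac{\dif\mu_g}{\dif\mu_*}(x)\nu(\dif g)$, apply Tonelli, absorb the Radon--Nikod\'ym derivative to rewrite the inner integral against $\mu_g$, invoke the $\mu_g$-reversibility of each $Q_g$, and reverse the steps. The additional remarks on the $\mu_*$-null exceptional set and joint measurability are sound bookkeeping that the paper leaves implicit.
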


\begin{proof}
Let $A, B\in\mathcal{E}$. Since $Q_g$ is $\mu_g$-reversible, 
    \begin{align*}
        \int_{A}\mu_*(\dif x)Q_*(x, B)&=
        \int_{g\in G}\int_{x\in A}\mu_*(\dif x)K(x,\dif g)Q_g(x, B)\\
        &=
        \int_{g\in G}\int_{x\in A}\mu_g(\dif x)Q_g(x, B)\nu(\dif g)\\
        &=
        \int_{g\in G}\int_{x\in B}\mu_g(\dif x)Q_g(x, A)\nu(\dif g)\\
        &=\int_{B}\mu_*(\dif x)Q_*(x, A). 
    \end{align*}
\end{proof}

From this, we can define the following Metropolis kernel. 

\begin{definition}[Metropolis--Haar kernel]
A Metropolis kernel $P_*$ of $(Q_*,\Pi)$  is called a \textit{Metropolis--Haar kernel}
 if $Q_*$ is a Haar-mixture kernel. 
\end{definition}

The Metropolis--Haar kernel is implemented as the following algorithm, where  $\pi(x)=(\dif\Pi/\dif\mu_*)(x)$. In the algorithm, $\mathcal{U}[0,1]$ is the uniform distribution on $[0,1]$.

% %\begin{algo}
% Metropolis kernel with Haar-mixture kernel
% %\vspace*{-12pt}
% \begin{tabbing}
%   \qquad \enspace Input $x\in E$\\
%      \qquad   \enspace Simulate $g\sim K(x,\dif g)$\\
%     \qquad\enspace  Simulate $y\sim Q_g(x,\dif y)$\\
%     \qquad\enspace  Simulate $u\sim\mathcal{U}[0,1]$\\
%     \qquad\enspace If $u\le \min\{1, \pi(y)/\pi(x)\}$, set $x\leftarrow y$\\
% \qquad \enspace Output $x$
% \end{tabbing}
% %\end{algo}

 \begin{algorithm}
 \caption{Metropolis--Haar kernel}
\bigskip
 \begin{algorithmic}[1]
 \setstretch{1.1}
 \renewcommand{\algorithmicrequire}{\textbf{Input:}}
 \renewcommand{\algorithmicensure}{\textbf{Output:}}
 \REQUIRE $x\in E$\\
  \STATE Simulate $g\sim K(x,\dif g)$\\
   \STATE Simulate $y\sim Q_g(x,\dif y)$\\
   \STATE Simulate $u\sim\mathcal{U}[0,1]$\\
   \STATE  If $u\le \min\{1, \pi(y)/\pi(x)\}$, set $x\leftarrow y$\\
 \RETURN $x$
 \ENSURE  $x$
 \end{algorithmic}
 \end{algorithm}
 
 The Metropolis--Haar kernel is reversible, but important in its own right. The underlying reference measure $\mu_*$ is heavier than $\mu_g$, which is expected to lead to  a robust algorithm. Examples of Metropolis--Haar kernels will be described in Section \ref{subsec:example}. 
 %Some theoretical properties are studied in  \citet{MR3668488, MR3788187} for Metropolis--Haar kernel with the autoregressive mixture proposal kernel.

\section{Unbiasedness, the random-walk property and sufficiency}\label{sec:unbiased}

\subsection{Unbiasedness}
\label{subsec:unbiasedness}

In this section, we introduce the unbiasedness property for efficient construction of the non-reversible kernel. 
Any measurable map $\Delta:E\rightarrow G$ is called a statistic in this paper, where $G=(G,\le)$ is a totally ordered set. 
In Section \ref{sec:guided}, a statistic $\Delta$ will guide a Markov kernel $Q(x,\dif y)$ according to the \revision{auxiliary} directional variable $i\in\{-,+\}$ as in \citet{Gustafson}. 
When the positive direction $i=+$ is selected, then $y$ is sampled according to $Q(x,\dif y)$ unless  $\Delta x\le\Delta y$ by rejection sampling. If the negative direction $i=-$ is selected, $y$ is sampled unless $\Delta y\le \Delta x$. 
It is typical that one of the rejection sampling directions has high rejection probability (see Example \ref{ex:rw}). 
To avoid this inefficiency, we consider a class of Markov kernels $Q$ such that the probabilities of the events $\Delta x\le\Delta y$ and  $\Delta y\le \Delta x$ measured by $Q(x,\cdot)$ are the same. We say $Q$ is unbiased if this property is satisfied. If the unbiasedness is violated, the rejection sampling can be inefficient because it takes a long time to exit the while loop of the rejection sampling. Therefore, the unbiasness property is necessary for efficient construction of the nonreversible kernel in our approach. 

\begin{definition}[$\Delta$-unbiasedness]
\label{def:unbiasedness}
Let $\Delta:E\rightarrow G$ be a statistic. 
We say a Markov kernel $Q$ on $E$ is $\Delta$-unbiased if
\begin{align*}
Q(x,\{y\in E:\Delta x\le\Delta y\})=
Q(x,\{y\in E:\Delta y\le\Delta x\})
\end{align*}
for any $x\in E$. 
Also, we say that two statistics $\Delta$
and $\Delta'$ from $E$ to possibly different totally ordered sets are equivalent if 
\begin{equation}
\begin{split}
Q(x,\{y\in E:\Delta x\le \Delta y\}\ominus
\{y\in E:\Delta' x\le \Delta' y\})&=0,\\ 
Q(x,\{y\in E:\Delta y\le \Delta x\}\ominus
\{y\in E:\Delta' y\le \Delta' x\})&=0
\end{split}
\nonumber
\end{equation}
for $x\in E$, where 
$A\ominus B=(A\cap B^c)\cup (A^c\cap B)$. 
\end{definition}

If $\Delta$ and $\Delta'$ are equivalent, then $\Delta$-unbiasedness implies $\Delta'$-unbiasedness.

\begin{example}[Random-walk kernel]
\label{ex:rw}
Let $v^{\top}$ be the transpose of $v\in\mathbb{R}^d$ and $\Gamma$ be a probability measure on $\mathbb{R}^d$ which is symmetric about the origin, that is, $\Gamma(A)=\Gamma(-A)$ for $-A=\{x\in E: -x\in A\}$.   
Let
$Q(x,A)=\Gamma(A-x)$. Then $Q$ is $\Delta$-unbiased for  $\Delta x=v^{\top}x$ for some $v\in\mathbb{R}^d$ since 
\begin{equation*}
\begin{split}
Q(x,\{y:\Delta x\le \Delta y\})
& =\Gamma(\{z: 0\le v^{\top}z\})\\
& =\Gamma(\{z: v^{\top}z\le 0\}). 
\end{split}
\end{equation*}

% \[
% Q(x,\{y:\Delta x\le \Delta y\})
% =\Gamma(\{x: 0\le v^{\top}x\})
% =\Gamma(\{x: v^{\top}x\le 0\}). 
% \]
On the other hand, $Q$ is not $\Delta'$-unbiased for 
$\Delta' x=x_1^2+\cdots +x_d^2$, where $x=(x_1,\ldots, x_d)$, if $\Gamma$ is not the Dirac measure  centred on $(0,\ldots,0)$. 
%This is because if $x=(0,\ldots, 0)$, then $\Delta' x=0$ but $\Delta' y>0$ with a positive probability. 
In particular, if $\Gamma(\{(0,\ldots, 0)\})=0$, then 
$Q(x, \{\Delta' y\le \Delta' x\})=0$ for $x=(0,\ldots, 0)$. 
\end{example}

\subsection{Random-walk property}
\label{subsec:random-walk}

Constructing a $\Delta$-unbiased Markov kernel is a crucial step for our approach. 
However, determining how to construct a $\Delta$-unbiased Markov kernel is nontrivial. The random-walk property is the key for this construction. 

Let $G$ be a topological group. 

\begin{definition}[$(\Delta,\Gamma)$-random-walk]
\revision{A} Markov kernel $Q(x,\dif y)$ has the $(\Delta,\Gamma)$-random-walk property if 
there is a function $\Delta:E\rightarrow G$ with a  probability measure $\Gamma$ on a topological group $G$ such that 
$\Gamma(H)=\Gamma(H^{-1})$ for any Borel set $H$ of $G$ and
\begin{equation}
 Q(x, \{y\in E: \Delta y\in H\})=\Gamma((\Delta x)^{-1}H). 
\label{eq:random-walk-property}
\end{equation}
\revision{Here,  $H^{-1}=\{g\in G:  g^{-1}\in H\}$. }
\end{definition}

A typical example of a Markov kernel with the $(\Delta,\Gamma)$-random-walk property is Example \ref{ex:rw}. \revision{We assume that $(G,\le)$ is an ordered group. 
%This requirement is also important for the sufficiency property, which is discussed in the next subsection. 
}
%We now construct a $\Delta$-unbiased Markov kernel by using the $(\Delta,\Gamma)$-random-walk property. 

\begin{proposition}
\label{prop:random_to_unbiasedness}
If $Q$ has the $(\Delta,\Gamma)$-random-walk property, then $Q$ is $\Delta$-unbiased. 
\end{proposition}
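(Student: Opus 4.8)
The plan is to compute the two $Q(x,\cdot)$-probabilities appearing in the definition of $\Delta$-unbiasedness directly from the $(\Delta,\Gamma)$-random-walk identity (\ref{eq:random-walk-property}), translate them into statements about the measure $\Gamma$ of certain half-lines in the ordered group $G$, and then use the symmetry $\Gamma(H)=\Gamma(H^{-1})$ together with the ordered-group axiom (\ref{eq:axiom-group-ordering}) to show these two $\Gamma$-masses coincide.

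First I would fix $x\in E$ and set $a=\Delta x\in G$. The event $\{y\in E:\Delta x\le\Delta y\}$ is $\{y:\Delta y\in H_+\}$ where $H_+=\{g\in G: a\le g\}$, and similarly $\{y:\Delta y\le\Delta x\}=\{y:\Delta y\in H_-\}$ with $H_-=\{g\in G: g\le a\}$. Both $H_+$ and $H_-$ are Borel sets of $G$ because they generate the order topology. Applying (\ref{eq:random-walk-property}) gives
\begin{align*}
Q(x,\{y:\Delta x\le\Delta y\})&=\Gamma(a^{-1}H_+),\\
Q(x,\{y:\Delta y\le\Delta x\})&=\Gamma(a^{-1}H_-).
\end{align*}
So it suffices to prove $\Gamma(a^{-1}H_+)=\Gamma(a^{-1}H_-)$.

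Next I would identify $a^{-1}H_+$ and $a^{-1}H_-$ concretely. Using the ordered-group axiom (\ref{eq:axiom-group-ordering}) with left multiplication by $a^{-1}$ (which is order-preserving, and whose inverse is left multiplication by $a$), one gets $a^{-1}H_+=\{g\in G: a\le ag\}=\{g\in G: e\le g\}$, where $e$ is the identity, and likewise $a^{-1}H_-=\{g\in G: g\le e\}$. Thus the claim reduces to $\Gamma(\{g: e\le g\})=\Gamma(\{g: g\le e\})$. Now I would invoke the symmetry of $\Gamma$: the set $\{g: g\le e\}$ is exactly the image of $\{g: e\le g\}$ under $g\mapsto g^{-1}$, because in an ordered group $e\le g$ is equivalent to $g^{-1}\le e$ (apply (\ref{eq:axiom-group-ordering}) with both left and right multiplication by $g^{-1}$). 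Hence $\{g:g\le e\}=\{g:e\le g\}^{-1}$, and $\Gamma(H)=\Gamma(H^{-1})$ with $H=\{g:e\le g\}$ finishes the argument.

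I expect the only slightly delicate point to be the book-keeping of the order relations: verifying that left multiplication by $a^{-1}$ sends the half-line $\{g:a\le g\}$ onto $\{g:e\le g\}$ bijectively, and that inversion sends $\{g:e\le g\}$ onto $\{g:g\le e\}$ — each of these is an elementary consequence of (\ref{eq:axiom-group-ordering}) but must be stated carefully because the argument uses the equivalences in both directions (so one needs that these maps are order-isomorphisms, not merely order-preserving). There is also a harmless measurability remark to make, namely that these half-lines are Borel so that $\Gamma$ can be evaluated on them and that (\ref{eq:random-walk-property}) applies; this is immediate from the definition of the order topology. Everything else is a direct substitution, so the proof should be short.
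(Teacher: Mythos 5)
Your proof is correct and follows essentially the same route as the paper: apply the random-walk identity to the half-lines $\{g:\Delta x\le g\}$ and $\{g:g\le\Delta x\}$, translate both to half-lines at the identity via left multiplication by $(\Delta x)^{-1}$, and conclude from $[e,+\infty)^{-1}=(-\infty,e]$ together with the symmetry of $\Gamma$. The extra care you take with the order-isomorphism and measurability book-keeping is sound but not a departure from the paper's argument.
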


\begin{proof}
Let $H=[\Delta x,+\infty)=\{g\in H: \Delta x\le g\}$. Then for the unit element $e$,  
\begin{align*}
Q(x,\{y\in E:\Delta x\le\Delta y\})
&=
Q(x,\{y\in E: \Delta y\in H\})\\ 
&=
\Gamma((\Delta x)^{-1}H)=\Gamma([e,+\infty)).
\end{align*}
Similarly, 
$Q(x,\{y\in E:\Delta y\le\Delta x\})=\Gamma((-\infty, e])$. 
Since $[e,+\infty)^{-1}=(-\infty,e]$,  $Q$ is $\Delta$-unbiased. 
\end{proof}

\subsection{Sufficiency}\label{subsec:sufficiency}

\revision{
So far in this section, we have introduced the $\Delta$-unbiasedness, which is the important property for the $\Delta$-guided Metropolis kernel in Section \ref{subsec:unbiasedness}. In Section \ref{subsec:random-walk}, we showed that the $(\Delta,\Gamma)$-random-walk property is sufficient for the $\Delta$-unbiasedness. In this subsection we will show that for the Haar-mixture kernel, the sufficiency property introduced below is sufficient for the $(\Delta,\Gamma)$-random-walk property, and for the $\Delta$-unbiasedness property. 
}

We would like to mention the intuition behind the sufficiency property. In general, the conditional law of $\Delta y$ given $x$ is not completely determined by $\Delta x$. If it is completely determined by $\Delta x$, we call $\Delta$ sufficient.  If $\Delta$ is sufficient, the equation (\ref{eq:random-walk-property}) is satisfied, although $\Gamma$ is not symmetric in general. 
When $Q$ is the Haar mixture kernel with some additional technical conditions, we will show that $\Gamma$ is symmetric thanks to the Haar measure property. 

Let $(G,\times)$ be a unimodular locally compact topological group. 
Also, let $(G,\le)$ be an ordered group, 
$E$ be a left $G$-set. In this paper, 
a statistics $\Delta:E\rightarrow G$ is called a $G$-statistics if  $\Delta gx=g\Delta x$ for $g\in G$ and $x\in E$. 
For a $\sigma$-finite measure $\Pi$ on $E$ and a $G$-statistic $\Delta:E\rightarrow G$, let $\widehat{\Pi}=\Pi\circ\Delta^{-1}$, that is, the image measure of $\Pi$ under $\Delta$. 
Let $\widehat{\mu}_*$ be the image measure of $\mu_*$ under $\Delta$. Then it is a left Haar measure, since
\begin{align*}
\widehat{\mu}_*(gH)&=\mu_*(\{y\in E: \Delta y\in gH\})\\
&=\mu_*(\{y\in E: \Delta( g^{-1}y)\in H\})\\
&=\mu_*(\{y\in E: \Delta y\in H\})\\
&=\widehat{\mu}_*(H)
\end{align*}
by the left-invariance of $\mu_*$. 
Since $G$ is unimodular, the left Haar measure $\widehat{\mu}_*$ and right Haar measure $\nu$ coincide up to a multiplicative  constant. From this fact, we can assume 
\[
\widehat{\mu}_*=\nu
\]
without loss of generality. 
Let $Q$ be a $\mu$-reversible kernel.

\begin{definition}[Sufficiency]
\label{def:sufficiency}
Let $\mu$ be a $\sigma$-finite measure. 
We call a $G$-statistic $\Delta$ sufficient if there is a Markov kernel $\widehat{Q}$ and a measurable function $h_1$ on $G$ such that 
$$
Q(x, \{y\in E: \Delta y\in H\})=\widehat{Q}(\Delta x, H)
$$
and 
$$
\frac{\dif \mu}{\dif\mu_*}(x)=h_1(\Delta x)
$$
$\mu_*$-almost surely. 
\end{definition}

% $$
% \int_{A\times B}\widehat{\mu}(\dif a)\widehat{Q}(a,\dif b)=
% \int_{B\times A}\widehat{\mu}(\dif b)\widehat{Q}(b,\dif a)
% $$

% $$
% \int_{\Delta x\in A, b\in B}\mu(\dif x)\widehat{Q}(\Delta x,\dif b)=
% \int_{\Delta y\in B,a\in A}\mu(\dif y)\widehat{Q}(\Delta y,\dif a)
% $$

% $$
% \int_{\Delta x\in A}\mu(\dif x)\widehat{Q}(\Delta x,B)=
% \int_{\Delta y\in B}\mu(\dif y)\widehat{Q}(\Delta y,A)
% $$

% $$
% \int_{\Delta x\in A}\mu(\dif x)Q(x,\{\Delta y\in B\})=
% \int_{\Delta y\in B}\mu(\dif y)Q(y,\{\Delta x\in A\})
% $$

% $$
% \int_{\Delta x\in A,\Delta y\in B}\mu(\dif x)Q(x,\dif y)=
% \int_{\Delta x\in A,\Delta y\in B}\mu(\dif y)Q(y,\dif x)
% $$

By the left-invariance of $\mu_*$, we have
\begin{equation}
\label{eq:left_invariance_derivative}
    \frac{\dif\mu_g}{\dif\mu_*}(x)=h_1(g\Delta x)
\end{equation}
since
\begin{align*}
    \mu_g(A)&=\mu(gA)\\
    &=\int_{gA} h_1(\Delta x)\mu_*(\dif x)\\
    &=\int_{A} h_1(g\Delta x)\mu_*(\dif x). 
\end{align*}
Let $\widehat{\mu}$ be the image measure of $\mu$ under $\Delta$. Then $\widehat{Q}$ is $\widehat{\mu}$-reversible and 
$$
    \frac{\dif\widehat{\mu}}{\dif\nu}(a)=\frac{\dif\widehat{\mu}}{\dif\widehat{\mu}_*}(a)=h_1(a). 
$$

\begin{example}[Sufficiency of the Autoregressive mixture kernel]
\label{ex:ar}
Consider the Autoregressive  kernel $Q$ in Example \ref{ex:ar-intro} and the statistics $\Delta$ defined in Example \ref{ex:ar-mid}.
We show that $\Delta x$ is sufficient for $Q$. 
The Markov kernel $Q(x,\dif y)$ corresponds to the update 
$$
y \leftarrow x_0+(1-\rho)^{1/2}(x-x_0)+\rho^{1/2}~M^{1/2}~w
$$
where $w\sim\mathcal{N}_d(0, I_d)$. 
For $\xi=(1-\rho)^{1/2}\rho^{-1/2}M^{-1/2}(x-x_0)$, 
\[
\Delta y=\rho\left\|\xi+w\right\|^2,
\]
where $\|\cdot\|$ is the Euclidean norm. 
Therefore, $\rho^{-1}\Delta y$ conditioned on $x$ follows the non-central Chi-squared distribution with $d$ degrees of freedom and non-central parameter
$\|\xi\|^2=(1-\rho)\rho^{-1}\Delta x$. 
Hence, the law of $\Delta y$  depends on $x$ only through $\Delta x$ and hence there exists a Markov kernel $\widehat{Q}$ as in Definition \ref{def:sufficiency}. Also, a simple calculation yields $h_1(g)\propto g^{d/2}\exp(-g/2)$.  Therefore, $\Delta$ is sufficient for $Q$. 
%$\nu(\dif g)\propto g^{-d/2}\dif g$
\end{example}

\begin{example}[Sufficiency of the Beta-Gamma and Chi-squared kernels]
If $G=E$ and $\Delta x=x$ is a $G$-statistic, then it is sufficient if $\mu$ is absolutely continuous with respect to $\mu_*$.   In particular, 
for the Beta-Gamma kernel in Example \ref{ex:thinnedgamma-intro}  and Chi-squared kernel \ref{ex:chisq}, $\Delta x=x$ is sufficient. 
\end{example}

%Also, if $Q_1(x_1,\dif y_1),\ldots, Q_K(x_K,\dif y_K)$ are constructed by difference construction, then 
%$\Delta x=x_1+\cdots +x_K$ is sufficient for a product $Q$ of $Q_1,\ldots, Q_K$. 

% $$
% \int_{b\in G}h(a,b)\nu(\dif a)\nu(\dif b)=\widehat{\mu}(\dif a)
% $$

\revision{
For a measure $\nu$, we write 
$\nu^{\otimes k}$ for the $k$th product of $\nu$ defined by 
$$
\nu^{\otimes k}(\dif x_1\cdots\dif x_k)=\nu(\dif x_1)\cdots\nu(\dif x_k)
$$
for $k\in\mathbb{N}$. }

\begin{proposition}
\label{prop:unbiasedness}
Suppose a $G$-statistic $\Delta$ is sufficient for a $\mu$-reversible kernel $Q$. 
Also, suppose a probability measure 
$\widehat{\mu}(\dif a)\widehat{Q}(a,\dif b)$ on $G\times G$ is absolutely continuous with respect to $\nu^{\otimes 2}$. 
 Then $Q_*$ has the $(\Delta,\Gamma)$-random-walk property for a probability measure $\Gamma$. 
 In particular, it is $\Delta$-unbiased. 
 %In particular, for probability measure $\Pi$ which is absolutely continuous with respect to $\mu_*$, the Metropolis kernel $P_*$ of $(Q_*, \Pi,\alpha)$ is the $\Delta$-guided Metropolis kernel where 
% \[
% \alpha(x,y)=\min\left\{1,\frac{\pi(y)}{\pi(x)}\right\}
% \]
%where $\pi(x)=(\dif \Pi/\dif\mu_*)(x)$. 
\end{proposition}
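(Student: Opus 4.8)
The plan is to compute $Q_*(x,\{y\in E:\Delta y\in H\})$ explicitly and read off a symmetric $\Gamma$. Write $a=\Delta x$. By sufficiency (Definition \ref{def:sufficiency}) together with the identity (\ref{eq:left_invariance_derivative}), the mixing kernel has the explicit form $K(x,\dif g)=h_1(g\Delta x)\,\nu(\dif g)=h_1(ga)\,\nu(\dif g)$. Since $\Delta$ is a $G$-statistic, $g\{y:\Delta y\in H\}=\{z\in E:\Delta z\in gH\}$ for every $g\in G$, and hence $Q_g(x,\{y:\Delta y\in H\})=Q(gx,\{z:\Delta z\in gH\})=\widehat Q(\Delta(gx),gH)=\widehat Q(ga,gH)$, the second equality by sufficiency and the last by $\Delta(gx)=g\Delta x$. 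Substituting these into the definition (\ref{eq:haar_mixture}) of $Q_*$ yields $Q_*(x,\{y:\Delta y\in H\})=\int_G h_1(ga)\,\widehat Q(ga,gH)\,\nu(\dif g)$.

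Next I would change variables $b=ga$. Because $\nu$ is the right Haar measure it is invariant under the right translation $g\mapsto ga$, so the integral equals $\int_G h_1(b)\,\widehat Q(b,ba^{-1}H)\,\nu(\dif b)$, and since $\dif\widehat\mu/\dif\nu=h_1$ this is $\int_G \widehat\mu(\dif b)\,\widehat Q(b,ba^{-1}H)$. Defining $\Gamma(H'):=\int_G\widehat\mu(\dif b)\,\widehat Q(b,bH')$ for Borel $H'\subseteq G$ (a probability measure: $\Gamma(G)=\widehat\mu(G)=1$, and countable additivity descends from $\widehat Q(b,\cdot)$), we get precisely $Q_*(x,\{y:\Delta y\in H\})=\Gamma(a^{-1}H)=\Gamma((\Delta x)^{-1}H)$, which is the random-walk identity (\ref{eq:random-walk-property}) for this $\Gamma$.

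It remains to show $\Gamma(H')=\Gamma((H')^{-1})$, and here the absolute-continuity hypothesis enters. Let $r$ be a $\nu^{\otimes 2}$-density of $\widehat\mu(\dif b)\widehat Q(b,\dif c)$. Since $\widehat Q$ is $\widehat\mu$-reversible, $\widehat\mu(\dif b)\widehat Q(b,\dif c)$ is symmetric under the flip $(b,c)\mapsto(c,b)$, so $r(b,c)=r(c,b)$ for $\nu^{\otimes 2}$-a.e.\ $(b,c)$. Writing $\Gamma(H')=\int_G\int_G\mathbf{1}[b^{-1}c\in H']\,r(b,c)\,\nu(\dif c)\nu(\dif b)$ and substituting $c=bg$ in the inner integral --- legitimate since unimodularity makes $\nu$ left-invariant as well --- shows that $\Gamma$ has $\nu$-density $\gamma(g)=\int_G r(b,bg)\,\nu(\dif b)$. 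Then, substituting $b\mapsto bg$ (right-invariance) and using $r(bg,b)=r(b,bg)$, $\gamma(g^{-1})=\int_G r(b,bg^{-1})\,\nu(\dif b)=\int_G r(bg,b)\,\nu(\dif b)=\int_G r(b,bg)\,\nu(\dif b)=\gamma(g)$, so $\Gamma$ is symmetric. Consequently $Q_*$ has the $(\Delta,\Gamma)$-random-walk property, and $\Delta$-unbiasedness follows at once from Proposition \ref{prop:random_to_unbiasedness}.

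The main obstacle is the symmetry of $\Gamma$: it rests on combining the reversibility of $\widehat Q$ with the two-sided invariance of $\nu$, which is available only because $G$ is unimodular, and it requires keeping careful track of which translations are left and which are right. By comparison, the change-of-variables bookkeeping leading to the formula for $\Gamma$ is routine, though one must be attentive that the set $gH$ appearing in $\widehat Q(ga,gH)$ depends on $g$ through both slots.
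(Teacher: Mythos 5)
Your proof is correct and follows essentially the same route as the paper's: compute $Q_*(x,\{y:\Delta y\in H\})$ via sufficiency and the explicit form of $K$, use the right-invariance of $\nu$ to pull out $\Gamma((\Delta x)^{-1}H)$, and obtain the symmetry of $\Gamma$ from the $\widehat\mu$-reversibility of $\widehat Q$ through the symmetric joint density (your $\gamma$ is exactly the paper's $\widehat h$). The only cosmetic difference is that the paper introduces the density $h(a,b)$ from the outset, whereas you defer it to the symmetry step.
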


\begin{proof}
Let $h(a,b)$ be \revision{the} Radon--Nikod\'ym derivative:
\begin{equation}
h(a,b)\nu(\dif a)\nu(\dif b)=\widehat{\mu}(\dif a)\widehat{Q}(a,\dif b). 
\nonumber
\end{equation}
By the $\widehat{\mu}$-reversibility of $\widehat{Q}$, $h(a,b)=h(b,a)$ almost surely. 
From the sufficiency property, we can rewrite $h_1$ and $\widehat{Q}$ by $h(a,b)$ and $\nu$: 
\begin{align*}
&h_1(a)=\int_{b\in G}h(a,b)\nu(\dif b),\\ &h_1(a)\widehat{Q}(a,\dif b)=h(a,b)\nu(\dif b), 
\end{align*}
$\nu$-almost. 
% First note that
% \[
% \mu_g(A)=\mu(gA)=\int_{gA}\mu(\dif x)=\int_{gA}h_1(\Delta x)\mu_*(\dif x)
% =\int_{A}h_1(g\Delta x)\mu_*(\dif x).
% \]
% Therefore, $(\dif\mu_g/\dif \mu_*)(x)=h_1(g\Delta x)$. 
Together with (\ref{eq:left_invariance_derivative}), we have
\begin{align*}
&Q_*(x,\{y: \Delta y\in H\})\\
=&
  \int_{a\in G}K(x,\dif a)Q(ax, \{y: \Delta y\in aH\})\\
=&
  \int_{a\in G}\frac{\dif\mu_a}{\dif\mu_*}(x)\nu(\dif a)\widehat{Q}(a \Delta x, aH)\\
=&
  \int_{a\in G}h_1(a\Delta x)\widehat{Q}(a \Delta x, aH)\nu(\dif a)\\
=&
  \int_{a\in G}\int_{b\in H}h(a\Delta x, ab)\nu(\dif a)\nu(\dif b)\\
 =&
  \int_{a\in G}\int_{b\in H}h(a, a(\Delta x)^{-1}b)\nu(\dif a)\nu(\dif b)
\end{align*}
% \begin{align*}
% Q_*(x,\{y: \Delta y\in A\})\\&=
%   \int_{a\in G}K(x,\dif a)Q(ax, \{y: \Delta y\in aA\})\\
% &=
%   \int_{a\in G}\frac{\dif\mu_a}{\dif\mu_*}(x)\nu(\dif a)\widehat{Q}(a \Delta x, aA)\\
% &=
%   \int_{a\in G}h_1(a\Delta x)\nu(\dif a)\frac{\int_{b\in aA}h(a\Delta x, b)\nu(\dif b)}{h_1(a\Delta x)}\\
% &=
%   \int_{a\in G}\int_{b\in A}h(a\Delta x, ab)\nu(\dif b)\nu(\dif a)\\
%  &=
%   \int_{a\in G}\int_{b\in A}h(a, a(\Delta x)^{-1}b)\nu(\dif b)\nu(\dif a)
% \end{align*}
where the last equality follows from the right-invariance of $\nu$. 
Let
\[
 \widehat{h}(b)= \int_{a\in G}h(a,ab)\nu(\dif a). 
\]
From $h(a,b)=h(b,a)$, 
\begin{align*}
\widehat{h}(b^{-1})&= \int_{a\in G}h(a,ab^{-1})\nu(\dif a)\\
&= \int_{a\in G}h(ab,a)\nu(\dif a)=\widehat{h}(b). 
\end{align*}
By using $\widehat{h}$, we can write 
\begin{align*}
Q_*(x,\{y: \Delta y\in H\})
&=
  \int_{b\in H}\widehat{h}((\Delta x)^{-1}b)\nu(\dif b)\\
 & =
  \int_{b\in (\Delta x)^{-1}H}\widehat{h}(b)\nu(\dif b). 
\end{align*}
The above is guaranteed to have the $(\Delta,\Gamma)$-random-walk property by introducing $\Gamma(H)=\int_{a\in H}\widehat{h}(a)\nu(\dif a)$ because
$$
Q(x,\{y:\Delta y\in H\})=\Gamma((\Delta x)^{-1}H)
$$ 
and 
\begin{align*}
\Gamma(H^{-1})&=\int_{a^{-1}\in H}\widehat{h}(a)\nu(\dif a)\\
&=\int_{a\in H}\widehat{h}(a)\nu(\dif a)=\Gamma(H). 
\end{align*}
Hence, it is $\Delta$-unbiased by Proposition \ref{prop:random_to_unbiasedness}. 
\end{proof}

\subsection{Multivariate version of one-dimensional kernels} 
\label{subsec:multivariate}

Essentially, we have introduced three Markov kernels, the Autoregressive kernel, the Chi-squared kernel, and the Beta-Gamma kernel. The state space of the first kernel is a general Euclidean space and that of the last two kernels is a subspace of the one-dimensional Euclidean space. In this subsection, we consider the multivariate version of the latter two kernels. 

We present different strategies for the two kernels. For the Chi-squared kernel, there is a sophisticated structure that allows multivariate version of the state space. For the Beta-Gamma kernel, there does not seem to have a special structure and so we apply a general approach which does not require any structure. First we show how to construct a multivariate extension for the Chi-squared kernel. 

\begin{example}[Multivariate Chi-squared mixture kernel]
\label{ex:multi-chi-squared}
For the Chi-squared kernel (Examples \ref{ex:chisq}, \ref{ex:chisq-mix}), we use the operation $(g,x)\mapsto (gx_1,\ldots, gx_d)$ with $G=\mathbb{R}_+$ and $E=\mathbb{R}_+^d$. Let $Q$ be the Markov kernel defined in Example \ref{ex:chisq}. Let 
$$
\mathcal{Q}(x,\dif y)=Q(x_1,\dif y_1)\cdots Q(x_d,\dif y_d)
$$
and $\mu(\dif x)=\mathcal{G}(L/2,1/2)^{\otimes d}$. Let $\Delta x=x_1+\cdots +x_d$. 
In this case, 
$\nu(\dif g)\propto g^{-1}\dif g$ and  $\mu_g(\dif x)=\mathcal{G}(L/2,g/2)^{\otimes d}$, and $\mathcal{Q}_g$ on $\mathbb{R}^d_+$ is the product of $Q_g$ on $\mathbb{R}_+$ defined in Example \ref{ex:chisq-mix}, that is, 
$$
\mathcal{Q}_g(x,\dif y)=Q_g(x_1,\dif y_1)\cdots Q_g(x_d,\dif y_d). 
$$
Then 
$$
\mu_*(\dif x)\propto (x_1\cdots x_d)^{L/2-1}(\Delta x)^{-dL/2}\dif x_1\cdots\dif x_d
$$
and $K(x,\dif g)=\mathcal{G}(Ld/2, \Delta x/2)$. From this expression, 
$h_1(g)\propto g^{dL/2}\exp(-g/2)$. Moreover, by the property of the non-central Chi-squared distribution, the law of $\rho^{-1}\Delta y$ where $y\sim \mathcal{Q}(x,\dif y)$ is the non-central Chi-squared distribution with $dL$-degrees of freedom with the non-central parameter $(1-\rho)\rho^{-1}\Delta x$. Therefore there exists a Markov kernel $\widehat{\mathcal{Q}}(g,\cdot)$ which is the scaled non-central Chi-squared distribution for each $g$. Obviously, it has a density function with respect to $\nu$. The statistic $\Delta$ is sufficient and the multivariate version of Chi-squared mixture kernel $\mathcal{Q}_*$ is $\Delta$-unbiased from this fact. 

\end{example}

\begin{example}[Multivariate Beta-Gamma mixture kernel]
\label{ex:product}
\revision{
For the Beta-Gamma kernel (Examples \ref{ex:thinnedgamma-intro}, \ref{ex:thinnedgamma}), we use the operation $(g,x)\mapsto (g_1x_1,\ldots, g_dx_d)$ with $G=(\mathbb{R}_+^d,\times)$ and $E=\mathbb{R}_+^d$ where 
$g=(g_1,\ldots, g_d)$ and  $x=(x_1,\ldots, x_d)$. 
We define the binary operation of $G$ by 
$(x,y)\mapsto (x_1y_1,\cdots, x_dy_d)$ and the identity element by $e=(1,\ldots, 1)$. In this case, the Markov kernel $\mathcal{Q}_g$ on $\mathbb{R}^d$ is the product of $Q_g$ on $\mathbb{R}$ defined in Example \ref{ex:thinnedgamma}, that is, 
$$
\mathcal{Q}_g(x,\dif y)=Q_{g_1}(x_1,\dif y_1)\cdots Q_{g_d}(x_d,\dif y_d). 
$$
Also, we have $K(x,\dif g)=\mathcal{G}(k, x_1)\cdots\mathcal{G}(k, x_d)$ and  $\mu_*(\dif x)=(x_1\cdots x_d)^{-1}\dif x_1\cdots \dif x_d $. 
The $G$-statistic $\Delta x=x$ is sufficient, and hence the Multivariate version of Beta-Gamma mixture kernel $\mathcal{Q}_*$ is $\Delta$-unbiased by  Proposition \ref{prop:unbiasedness}. 
}
\end{example}

For $G=\mathbb{R}_+^d$ in Example \ref{ex:product}, several types of order relations are possible. Any ordering will do as long as (\ref{eq:axiom-group-ordering}) is satisfied. The popular lexicographic order depends on how we index the coordinates.  To avoid this unfavourable property, we consider the modified lexicographic order defined below. 

\begin{example}[Modified lexicographical order]
Let $G=(\mathbb{R}_+^d,\times)$.  
For $x=(x_1,\ldots, x_d)\in G$, let 
$$
s(x)_i=x_i\times\cdots\times x_d
$$ be a partial product of the vector $x$ from the $i$th element to the $d$th element. 
%We prepare the function $s$ because $s(x)_1=x_1+\cdots +x_d$ has a special meaning for some Markov kernels which will be explained later. 
A version of lexicographical order $\le $ can be defined as follows.  Counting from $i=1,\ldots, d$, 
\begin{itemize}
    \item if $s(x)_i=s(y)_i$ for all $i$ or
    \item if the first index $i$ such that 
$s(x)_i\neq s(y)_i$ satisfies
$s(x)_i<s(y)_i$,
\end{itemize} 
then we write $x\le y$. 
It is not difficult to check that this ordering satisfies (\ref{eq:axiom-group-ordering}).
\end{example}

Since (\ref{eq:axiom-group-ordering}) is satisfied, the multivariate Beta-Gamma mixture kernel is $\Delta$-unbiased with this order for $G$. 
Note that the modified lexicographic order also has the same problem as that of the (un-modified) lexicographic order, that is, it depends how we index the coordinate. However, the problem occurs with probability $0$. This is because the first step of the sort (i.e. $s(x)_1<s(y)_1$ or $s(y)_1<s(x)_1$) does not depend on the order of the indexes and the first step determines the order with probability $1$. More precisely,  
$$
\Delta(x_1,\ldots, x_d)=(x_1,\ldots, x_d)
$$
with the modified lexicographical ordering and 
$$
\Delta'(x_1,\ldots,x_d)=x_1\times\cdots\times x_d
$$
in $\mathbb{R}_+$ with usual ordering are equivalent in the sense of Definition \ref{def:unbiasedness} because $\Delta'(x)=s(x)_1$. \revision{In particular, the multivariate Beta-Gamma mixture kernel is $\Delta'$-unbiased since the kernel is $\Delta$-unbiased.}  Note that $\Delta'$ is not a $G$-statistic, since it does not satisfy $\Delta'gx=g\Delta'x$. 

% In addition, the multivariate Beta-Gamma mixture kernel is not $\Delta''$-unbiased for a na\"ive choice $\Delta''(x)=x_1+\cdots +x_d$. To see this, consider the case $x=(t,t,\ldots, t)$. In this case, we have 
% \begin{align*}
%     &\{y\in E:\Delta''x \le \Delta'' y\}\\
%     &=
%     \left\{y\in E:1 \le \frac{y_1+\cdots+y_d}{dt}\right\}\\
%     &\subset 
%     \left\{y\in E:1 \le \frac{(y_1\times\cdots\times y_d)^{1/d}}{t}\right\}\\
%     &=
%     \left\{y\in E:\Delta'x \le \Delta' y\right\}
% \end{align*}
% where the above inclusion means that the left-hand side is a proper subset of the right-hand side. Since we already know that the $\Delta'$-unbiasedness of the kernel, the multivariate Beta-Gamma mixture kernel is not $\Delta''$-unbiased, together with the fact that the Markov kernel has a strictly positive density with respect to the Lebesgue measure. 

 \section{Guided Metropolis kernel}
\label{sec:guided}

\subsection{$\Delta$-Guided Metropolis kernel}
\label{subsec:delta-guided}

\begin{definition}[$\Delta$-guided Metropolis kernel]\label{def:guided}
For $\Delta$-unbiased Markov kernel $Q$, probability measure $\Pi$ and a measurable function $\alpha:E\times E\rightarrow [0,1]$ defined in (\ref{eq:acceptance}), we say a Markov kernel $P_G$ on $E\times \{-,+\}$ is the $\Delta$-guided Metropolis kernel of $(Q,\Pi)$ if 
\begin{align*}
P_G(x,+,\dif y,+)&=Q_+(x,\dif y)\alpha(x,y)\\
P_G(x,+,\dif y,-)&=\delta_x(\dif y)\left\{1-\int_EQ_+(x,\dif y)\alpha(x,y)\right\}\\
P_G(x,-,\dif y,-)&=Q_-(x,\dif y)\alpha(x,y)\\
P_G(x,-,\dif y,+)&=\delta_x(\dif y)\left\{1-\int_EQ_-(x,\dif y)\alpha(x,y)\right\},
\end{align*}
where
\begin{align*}
Q_+(x,\dif y)&=2Q(x,\dif y)1_{\{\Delta x <\Delta y\}}+Q(x,\dif y)1_{\{\Delta x=\Delta y\}},\\ 
Q_-(x,\dif y)&=2Q(x,\dif y)1_{\{\Delta y<\Delta x\}}+Q(x,\dif y)1_{\{\Delta x=\Delta y\}}. 
\end{align*}
\end{definition}
The Markov kernel $P_G$ satisfies the so-called $\Pi_G$-skew-reversible property
\begin{equation}
\begin{split}
\Pi_G(\dif x,+)P_G(x,+,\dif y,+)&=
\Pi_G(\dif y,-)P_G(y,-,\dif x,-),\\
\Pi_G(\dif x,+)P_G(x,+,\dif y,-)&=
\Pi_G(\dif y,-)P_G(y,-,\dif x,+),
\end{split}
\nonumber
\end{equation}
where 
\[
\Pi_G=\Pi\otimes(\delta_{-}+\delta_{+})/2. 
\]
\revision{
Here, for a probability measures $\nu$ and $\mu$, 
$(\nu\otimes\mu)(\dif x\dif y)=\nu(\dif x)\mu(\dif y)$. }
With this property, it is straightforward to check that $P_G$ is $\Pi_G$-invariant.

\begin{example}[Guided-walk kernel]
The $\Delta$-guided Metropolis kernel corresponding to the random-walk kernel $Q$  on $\mathbb{R}$ is called the guided-walk in  \citet{Gustafson}. For a multivariate target distribution,  $\Delta x=v^{\top}x$ for some $v\in\mathbb{R}^d$ was considered in \citet{Gustafson,MR3905547}. 
\end{example}

As described in Proposition \ref{prop:unbiasedness}, a Haar mixture kernel $Q_*$ is $\Delta$-unbiased if $\Delta$ is sufficient and some other technical conditions are satisfied. 
Therefore, we can construct a $\Delta$-guided Metropolis kernel $(Q_*,\Pi)$ using the Haar mixture kernel $Q_*$. 

\begin{definition}[$\Delta$-guided Metropolis--Haar kernel]
If a Haar-mixture kernel $Q_*$ is $\Delta$-unbiased, the $\Delta$-guided Metropolis kernel of $(Q_*,\Pi)$ is called the \textit{$\Delta$-guided Metropolis--Haar kernel}.
\end{definition}

The $\Delta$-guided Metropolis--Haar kernel  is given as Algorithm \ref{alg:delta}, where we let $\pi(x)=\dif\Pi/\dif\mu_*(x)$. 
 This Metropolis--Haar kernels is further discussed in detail in
 Sections \ref{subsec:step-by-step} and \ref{subsec:example}. 
 
\begin{algorithm}
 \caption{$\Delta$-guided Metropolis--Haar kernel }
\label{alg:delta} 
\smallskip
 \begin{algorithmic}[1]
 \setstretch{1.1}
 \renewcommand{\algorithmicrequire}{\textbf{Input:}}
 \renewcommand{\algorithmicensure}{\textbf{Output:}}
 \REQUIRE Input $(x,z)\in E\times \{-,+\}$\\
  \STATE Set $y=x$\\
   \STATE While $(\Delta y-\Delta x)\times z\le 0$\\
       \qquad Simulate $g\sim K(x,\dif g)$\\
       \qquad Simulate $y\sim Q_g(x,\dif y)$\\
   \STATE Simulate $u\sim\mathcal{U}[0,1]$\\
   \STATE  If $u\le \min\{1, \pi(y)/\pi(x)\}$, set $x\leftarrow y$\\
   Else set $z\leftarrow -z$ 
 \ENSURE  $(x,z)$
 \end{algorithmic}
 \end{algorithm}

Let $P$ be the Metropolis kernel of $(Q,\Pi)$. 
We now see that $P_G$ is always expected to be better than $P$ in the sense of the asymptotic variance corresponding to the central limit theorem. The inner product 
$\langle f,g\rangle=\int f(x)g(x)\Pi(\dif x)$ and the norm $\|f\|=(\langle f,f\rangle)^{1/2}$ can be defined on the space of $\Pi$-square integrable functions. Let 
$(X_0, X_1,\ldots)$ be a Markov chain with Markov kernel $P$ and $X_0\sim \Pi$. Then we define the asymptotic variance
$$
\operatorname{Var}(f,P)=\lim_{N\rightarrow\infty}\operatorname{Var}\left(\revision{N^{-1/2}}\sum_{n=1}^Nf(X_n)\right)
$$
if the right-hand side exists. 
The existence of the right-hand side limit is a kernel-specific problem and not addressed here. 
Let $\lambda \in [0,1)$. 
As in \citet{MR3551794}, to avoid a kernel-specific argument, we consider a pseudo asymptotic variance 
$$
\operatorname{Var}_\lambda(f,P)=\|f_0\|^2+2\sum_{n=1}^\infty\lambda^n\langle f_0,P^n f_0\rangle,
$$
where $f_0=f-\Pi(f)$, which always exists. 
Under some conditions, $\lim_{\lambda\uparrow 1-}\operatorname{Var}_\lambda(f,P)=\operatorname{Var}(f,P)$. 
We can also define $\operatorname{Var}_\lambda(f,P_G)$ for $\Pi$-square integrable function $f$ on $E$ by considering 
$f((x,i))=f(x)$. 

\begin{proposition}[Theorem 3.17 of \citet{andrieu2019peskuntierney}]
Suppose that $f$ is $\Pi$-square integrable. Then for $\lambda\in [0,1)$, 
$\operatorname{Var}_\lambda(f,P_G)\le \operatorname{Var}_\lambda(f,P)$. 
\end{proposition}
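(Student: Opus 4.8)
The plan is to deduce this from the cited result of \citet{andrieu2019peskuntierney} (their Theorem 3.17) by exhibiting $P_G$ as a ``lifted'' version of $P$ in the precise sense that their comparison theorem requires. The key structural fact is that $P_G$ is $\Pi_G$-skew-reversible with respect to the involution $(x,i)\mapsto(x,-i)$ on $E\times\{-,+\}$, and that the marginal of $P_G$ obtained by averaging over the direction variable recovers $P$. So the first step is to make this marginalisation explicit: starting from $(x,i)$ with $i$ uniform on $\{-,+\}$, one checks that
\begin{equation*}
\tfrac12 P_G(x,+,\dif y,+)+\tfrac12 P_G(x,+,\dif y,-)+\tfrac12 P_G(x,-,\dif y,-)+\tfrac12 P_G(x,-,\dif y,+)
\end{equation*}
equals $P(x,\dif y)$. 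This uses exactly the $\Delta$-unbiasedness of $Q$: the off-diagonal parts $Q_+(x,\dif y)\alpha(x,y)$ and $Q_-(x,\dif y)\alpha(x,y)$ average to $Q(x,\dif y)\alpha(x,y)$ because the indicator weights $2\cdot 1_{\{\Delta x<\Delta y\}}$ and $2\cdot 1_{\{\Delta y<\Delta x\}}$ average to $1_{\{\Delta x\neq\Delta y\}}$ (plus the $1_{\{\Delta x=\Delta y\}}$ piece already common to both), while the two $\delta_x$ rejection masses average to the rejection mass of $P$ since unbiasedness makes the two directional rejection probabilities sum to $2$ times the rejection probability of $P$.

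The second step is to invoke the Peskun--Tierney-type ordering framework of \citet{andrieu2019peskuntierney}: once $P_G$ is identified as a $\Pi_G$-invariant kernel on the augmented space that is skew-reversible under the direction-flip involution and whose ``folded'' chain is $P$, their Theorem 3.17 gives directly that $\operatorname{Var}_\lambda(f,P_G)\le\operatorname{Var}_\lambda(f,P)$ for every $\Pi$-square-integrable $f$ lifted to the augmented space by $f((x,i))=f(x)$ and every $\lambda\in[0,1)$. The content here is really just verifying the hypotheses of that theorem — $\Pi_G$-invariance (already noted after Definition \ref{def:guided}), skew-reversibility (the two displayed identities after Definition \ref{def:guided}), and the consistency of the folded kernel with $P$ — and then quoting it. Since the proposition is literally stated as ``Theorem 3.17 of \citet{andrieu2019peskuntierney}'', the proof is expected to be short: reduce to checking that our setup is an instance of their setup.

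The main obstacle is bookkeeping around the $\{\Delta x=\Delta y\}$ set and the rejection (diagonal) masses: one has to be careful that the factor-of-$2$ reweighting in $Q_\pm$ together with the $\delta_x$ terms averages back exactly to $P$, with no leftover, and that this holds as an identity of measures (not just up to $\Pi$-null sets). The unbiasedness hypothesis is precisely what makes the accounting balance, so the proof should foreground that identity
\begin{equation*}
\int_E Q_+(x,\dif y)\alpha(x,y)+\int_E Q_-(x,\dif y)\alpha(x,y)=2\int_E Q(x,\dif y)\alpha(x,y),
\end{equation*}
which follows from $Q(x,\{\Delta x\le\Delta y\})=Q(x,\{\Delta y\le\Delta x\})$ after splitting off the common $\{\Delta x=\Delta y\}$ part. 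With that identity in hand, the reduction to \citet{andrieu2019peskuntierney} is immediate.
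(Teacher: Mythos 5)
Your reduction is correct and matches what the paper does, which is to give no proof beyond citing Theorem 3.17 of \citet{andrieu2019peskuntierney}: the hypotheses to check are exactly $\Pi_G$-invariance, skew-reversibility under the direction flip $(x,i)\mapsto(x,-i)$, and the folding identity $\tfrac12(Q_++Q_-)=Q$ so that the direction-marginalised kernel is $P$. One small correction: that folding identity holds identically, since $2\cdot 1_{\{\Delta x<\Delta y\}}$ and $2\cdot 1_{\{\Delta y<\Delta x\}}$ always average to $1_{\{\Delta x\neq\Delta y\}}$ on a totally ordered $G$, so $\Delta$-unbiasedness is not what makes the accounting balance --- its actual role is to guarantee that $Q_+$ and $Q_-$ each have total mass one, so that $P_G$ is a genuine Markov kernel with nonnegative rejection masses and the lifted framework of the cited theorem applies.
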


By taking $\lambda\uparrow 1$, we can expect that the non-reversible kernel $P_G$ is better than $P$ in the sense of smaller asymptotic variance.

% \begin{algo}
% $\Delta$-guided Metropolis kernel
% %\vspace*{-12pt}
% \begin{tabbing}
%   \qquad \enspace Input $(x,z)\in E\times \{-,+\}$\\

%   \qquad \enspace Set $y=x$\\
%   \qquad \enspace While $(\Delta y-\Delta x)\times z\le 0$\\
%   \qquad \qquad Simulate $g\sim K(x,\dif g)$\\
%   \qquad \qquad Simulate $y\sim Q_g(x,\dif y)$\\
%   \qquad \enspace Simulate $u\sim\mathcal{U}[0,1]$\\
%   \qquad \enspace If $u\le \min\{1, \pi(y)/\pi(x)\}$, set $x\leftarrow y$\\
%   \qquad \enspace Else set $z\leftarrow -z$ \\
% \qquad \enspace Output $(x,z)$
% \end{tabbing}
% \end{algo}

\subsection{Step-by-step instruction for creating a $\Delta$-guided Metropolis--Haar kernel}
\label{subsec:step-by-step}

Here is a set of necessary conditions to build a Haar-mixture kernel $Q_*$ and a Metropolis--Haar kernel $(Q_*,\Pi)$. 

\begin{enumerate}
    \item $G=(G,\times)$ is a locally compact topological group equipped with the Borel $\sigma$-algebra and the right Haar measure $\nu$.   
    \item State space $E$ is a left $G$-set. 
    \item $\mu$ is a $\sigma$-finite measure and $Q$ is $\mu$-reversible Markov kernel on $(E,\mathcal{E})$.
    \item There exists a Markov kernel $K(x,\dif g)$ as in (\ref{eq:markov-kernel-k}). 
\end{enumerate}
Then we can construct a Haar--mixture kernel $Q_*$ as in (\ref{eq:haar_mixture}). 
Here is an additional set of necessary conditions to build a $\Delta$-guided Metropolis--Haar kernel. 

\begin{enumerate}
    \item $G=(G,\le)$ is a\revision{n} ordered group, and $G=(G,\times)$ is a unimodular locally compact topological group. 
    \item $\Delta$ is a $G$-statistics. 
    \item $\Delta$ is sufficient for $Q$. 
\end{enumerate}

\subsection{Examples of $\Delta$-guided Metropolis--Haar kernels}
\label{subsec:example}

Here we present some of the $\Delta$-guided Metropolis--Haar kernels. 

\begin{example}[Guided Metropolis autoregressive mixture kernel]
\label{ex:guided-pcn}
\revision{
The Metropolis kernel of $(Q,\Pi)$ with the proposal kernel $Q$ defined in Example \ref{ex:ar-intro} is called the preconditioned Crank--Nicolson kernel.  This kernel was studied in \citet{MR1723510, MR2444507, MR3135540}. The Metropolis--Haar kernel with the Haar-mixture kernel $Q_*$ in Example \ref{ex:ar-mid} is called the mixed preconditioned Crank--Nicolson kernel. This kernel was developed in \citet{MR3668488,MR3788187}. 
The $\Delta$-guided Metropolis--Haar kernel of $(Q_*,\Pi)$ with $E=\mathbb{R}^d$ and $G=\mathbb{R}_+$, called the $\Delta$-guided mixed preconditioned Crank--Nicolson kernel, can be constructed as in Definition \ref{def:guided}. In this case, for a constant $x_0\in\mathbb{R}^d$ and a symmetric positive definite matrix $M$, 
$\Delta x=(x-x_0)^{\top}M^{-1}(x-x_0)$, $K(x,\dif g)=\mathcal{G}(d/2, \Delta x/2)$ and $Q_g(x, \dif y)=\mathcal{N}_d(x_0+(1-\rho)^{1/2}(x-x_0),g^{-1}\rho M)$ and $\mu_*(\dif x)\propto (\Delta x)^{-d/2}\dif x$. We can perform the $\Delta$-guided Metropolis--Haar kernel as in Algorithm \ref{alg:delta}. 
}
\end{example}

\begin{example}[Guided Metropolis Multivariate Beta-Gamma mixture kernel]
\label{ex:beta-gamma-guide}
\revision{
The Metropolis kernel of $(Q,\Pi)$ and the Metropolis--Haar kernel of $(Q_*,\Pi)$  in Example \ref{ex:product} can be defined naturally, and the former kernel was studied in \cite{MR4016139}. 
 The $\Delta'$-guided Metropolis--Haar kernel with $\Delta'(x)=x_1\times\cdots \times x_d$ is constructed by 
$K$, $Q_g$ and $\mu_*$ as in Example \ref{ex:product}. 
In this case, $E=G=\mathbb{R}^d_+$.
}
\end{example}

\begin{example}[Guided Metropolis Multivariate Chi-squared mixture kernel]
\label{ex:chisq-quide}
\revision{
The Metropolis kernel of $(Q,\Pi)$ and that of $(Q_*,\Pi)$ in Example \ref{ex:multi-chi-squared} can be defined naturally. The $\Delta$-guided kernel with $\Delta x=x_1+\cdots +x_d$ is constructed by $K$, $Q_g$ and $\mu_*$ as in Example \ref{ex:multi-chi-squared}. In this case, $E=\mathbb{R}^d_+$ and $G=\mathbb{R}_+$. 
}
\end{example}

% Next theorem is immediately follows from the same argument in Theorem \ref{theo:reversible}. Although $P_G$ does not have usual Metropolis rejection procedure,  we can apply the same arguments by showing two successive directional switches $P_G(x, +1,\{(x,-1)\})\times P_G(x, -1,\{(x,+1)\})$ is the same as the rejection of Metropolis algorithm. 

% \begin{theorem}\label{theo:reversible2}
% Under the same condition on $\Pi$ and $Q$ in Theorem \ref{theo:reversible}, 
% if the $\Delta$-guided Metropolis kernel is geometrically ergodic, then (\ref{eq:pi_boundedness}) is satisfied. 
% \end{theorem}

\section{Simulation}
\label{sec:simulation}
\subsection{\revision{$\Delta$-guided Metropolis--Haar kernel on $\mathbb{R}^d$}}
\label{sec:multiplicative}

\revision{
In this simulation, we consider the autoregressive based kernel considered in Example \ref{ex:guided-pcn}. More precisely, we study the preconditioned Crank--Nicolson kernel, the mixed-preconditioned Crank--Nicolson kernel and the $\Delta$-guided mixed preconditioned Crank--Nicolson kernel. The random-walk Metropolis kernel is also compared for a reference. 
All these methods are gradient-free, information blind methods in the sense that the proposal kernel does not use the derivative of $\log\pi(x)$. 
Although this may sound daunting, sometimes a simple structure leads to robustness and efficiency as described through simulation experiments. Moreover, parameter tuning for these Markov kernels based on a reversible proposal kernel is relatively straightforward. We can learn the parameters of the reference measures $\mu$ or $\mu_*$ using the standard technique of treating the MCMC outputs as if they were from identically and independent observations of $\mu$ or $\mu^*$, even though $\mu^*$ is generally improper distribution. Since parameter tuning is not our main focus, we do not elaborate on this point in this paper. 
}

We also compare these methods with gradient based, informed algorithms. 
The Metropolis-adjusted Langevin algorithm \citep{doi:10.1063/1.436415,RT2} and the Hamiltonian Monte Carlo algorithm \citep{Duane1987216,MR2858447} are popular gradient based algorithms. Furthermore, we consider the methods that use both gradient-based and autoregressive kernel based ideas. This class includes such as the infinite dimensional Metropolis-adjusted Langevin algorithm \citep{MR2444507, MR3135540}, a marginal sampler proposed in \citep{MR3849342}, which we will refer to the marginal gradient-based sampling, and the infinite dimensional Hamiltonian Monte Carlo \citep{MR2858447,Ottobre_2016,BESKOS2017327}. 

\revision{We performed all experiments using a desktop computer with 6 cores Intel i7-5930K (3.50GHz) CPU. All algorithms other than the Hamilton Monte Carlo algorithm were coded in R version 3.6.3 \citep{rmanual} using the \textit{RcppArmadillo} package version 0.9.850.1.0 \citep{RcppArmadillo}. The results for the Hamilton Monte Carlo algorithm were obtained using \textit{rstan} version 2.19.3 \citep{rstan}. For a fair comparison, we use a single core and chain for \textit{rstan}. The code for all experiments is available in the online repository at the link https://github.com/Xiaolin-Song/Non-reversible-guided-Metropolis-kernel/.
}

% We implement ... multi threading quad core CPU. ... All results are in githbu URL... We used Rcpp version ... and stan version ...

\subsubsection{Discrete observation of stochastic diffusion process}
\label{5.1.1}
First we consider a problem which is difficult to apply gradient based Markov chain Monte Carlo methodologies due to high cost of derivative calculation. 
Let $\alpha\in\mathbb{R}^k$. 
Suppose that $(X_t)_{t\in [0,T]}$ is a solution process of a stochastic differential equation
$$
\dif X_t=a(X_t,\alpha)\dif t+b(X_t)\dif W_t; X_0=x_0
$$
where $(W_t)_{t\in [0,T]}$ is the $d$-dimensional standard Wiener process and $a:\mathbb{R}^d\times\mathbb{R}^k\rightarrow \mathbb{R}^d$ and 
$b:\mathbb{R}^d\rightarrow \mathbb{R}^{d\times d}$ are the drift and diffusion coefficient respectively. 
We only observe $X_0, X_h, X_{2h},\ldots, X_{Nh}$ where $N\in\mathbb{N}$ and $h=T/N$. 

We consider a Bayesian inference based on the local Gaussian approximate likelihood since explicit form of the probability density function is not available in general. 
The local Gaussian approximation approach, including simple least square estimate approach, has been studied in such as \cite{Prakasa_Rao_1983,MR999016,Florens_zmirou_1989,YOSHIDA1992220}. See also \cite{doi:10.1111/j.1467-9868.2006.00552.x,Beskos_2009} for non-local Gaussian approach based on unbiased estimate of the likelihood. 

We consider a Bayesian inference for $\alpha\in \mathbb{R}^{50}$ using local Gaussian approximated likelihood. 
We set the diffusion coefficient to be $b\equiv 1$, and the drift coefficient to be
$$
a(x,\alpha)=\frac{1}{2}\nabla\log\pi(x-\alpha);
$$
where \revision{$\pi(x)\propto 1/(1+x^{\top}\Sigma^{-1}x/20)^{35}$ where $\pi(x)$ here is the probability density function with respect to the Lebesgue measure}. See \cite{MR2038227}. \revision{Here $\Sigma$ is} generated from a Wishart distribution with $50$-degrees of freedom and the identity matrix as the scale matrix. The terminal time is $T=10$ and the number of observation is $N=10^3$. 
The prior distribution is a normal distribution $\mathcal{N}_{50}(0,10~I_{50})$. 

\begin{table*}[ht!]
\centering
	\caption{Markov kernels in Section \ref{sec:multiplicative}. The first four algorithms are gradient-free, information blind algorithms. The last five algorithms are gradient based, informed algorithms. }
	\label{alg}
	\ra{1}
	\begin{tabular}{ll}
		\textsc{rwm}& Random-walk Metropolis\\
		\textsc{pcn}& Preconditioned Crank--Nicolson\\
		\textsc{mpcn}& Mixed preconditioned Crank--Nicolson\\
		\textsc{gmpcn}& $\Delta$-guided mixed preconditioned Crank--Nicolson\\
		\hline
		\textsc{mala}& Metropolis-adjusted Langevin \\
		$\infty$-\textsc{mala}& Infinite dimensional Metropolis-adjusted Langevin\\
		\textsc{mgrad}& Marginal gradient-based sampling\\
		\textsc{hmc}& Hamiltonian Monte Carlo via \textit{rstan}\\
		$\infty$-\textsc{hmc}& Infinite dimensional Hamiltonian Monte Carlo
	\end{tabular}
\end{table*}

% \begin{table*}[ht]
% \centering
% 	\caption{Effective sample sizes (ESS) of log-likelihood per second in Bayesian  inference of stochastic diffusion process for the nine Markov kernels listed in Table \ref{alg}}
% 	\label{tab11}
% 	\ra{1}
% \begin{tabular}{ccccc|ccccc}%\toprule
%   & 	\textsc{rwm} & \textsc{pcn} & \textsc{mpcn} &\textsc{gmpcn} & 	\textsc{mala} & $\infty$-\textsc{mala} & \textsc{mgrad} & \textsc{hmc} &$\infty$-\textsc{hmc}\\
% %\midrule
% \textsc{ess}& 58.54 & 101.24 & 1902.38 & 2629.50 & 947.08 & 905.82 & 398.07 & 80000.00 & 3802.81\\
% \textsc{ess/second} & 0.13 & 0.22 & 4.12 & 5.70 & 0.01 & 0.01 & 0.01 & 1.24 & 0.05 \\
% \textsc{time-cost} & 462.53 & 466.45 & 461.90 & 461.48 & 76146.27 & 76606.64 & 76918.90 & 64740.60 & 76330.38 \\
% \textsc{acceptance rate} & 0.25 & 0.20 & 0.45 & 0.42 & 0.58 & 0.59 & 0.34 & 1.00 & 0.62 \\
% %\bottomrule
% \end{tabular}
% \end{table*}

\begin{figure*}[ht]
	\centering
	\includegraphics[width=0.95\textwidth]{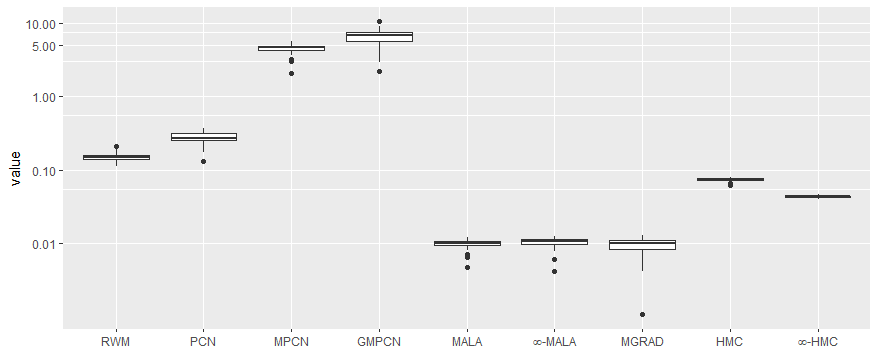}
	\caption{Effective sample sizes of log-likelihood per second of the stochastic diffusion process in Section \ref{5.1.1} for the nine Markov kernels listed in Table \ref{alg}. The $y$-axis is on a logarithmic scale. }
	\label{figb}
\end{figure*}

\begin{figure*}[ht]
	\centering
	\includegraphics[width=0.9\textwidth]{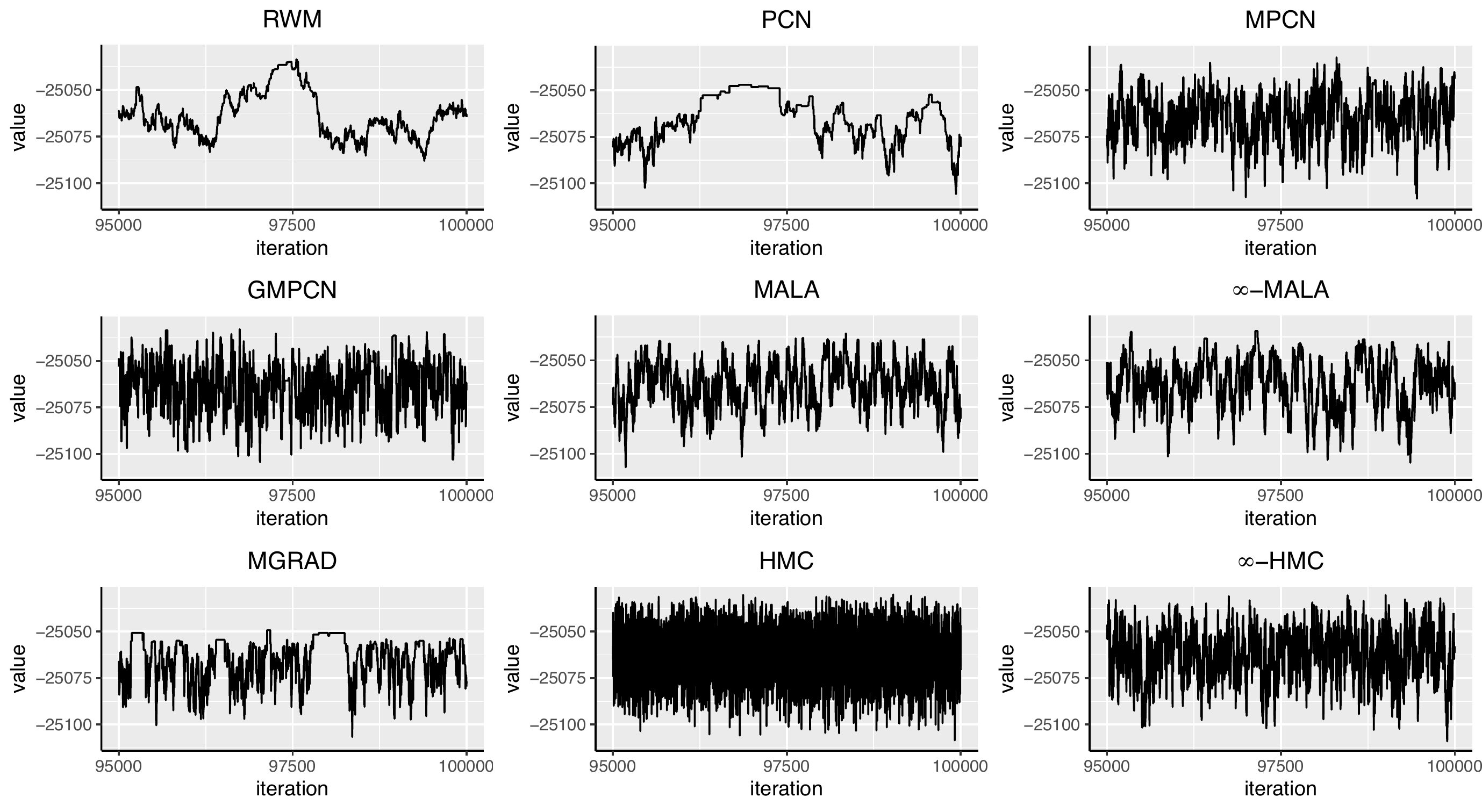}
	\caption{Trace plots of log-likelihood   of the stochastic diffusion process in Section \ref{5.1.1} for the nine Markov kernels listed in Table \ref{alg}.}
	\label{figp}
\end{figure*}

The Markov kernels used in this simulation is listed in Table \ref{alg}. 
The first four kernels in the table are gradient-free, information blind kernels. The last five kernels are gradient based, informed kernels. 
All kernels other than the 1st, 5th and 8th algorithms in Table \ref{alg} use the prior distribution as the reference distribution. 
\revision{Reference measure here means that the proposal kernel itself is reversible with respect to the measure, or the proposal kernel approximates another Markov kernel that is reversible with respect to the measure. }
%For HMC, we use Stan package, which can calculate the log-likelihood and gradient automatically. There is also no need to tuning the step size and number of steps.

We apply the Markov chain Monte Carlo algorithms by a 2-step procedure. In the first stage, we run the random-walk Metropolis algorithm as a burn-in stage. For Gaussian reference kernels, $x_0$ is estimated by the empirical mean in the burn-in stage. After the burn-in, we run each algorithm. The result was presented in Table \ref{figb} and Figure \ref{figp}. \revision{In this example, the covariance matrix is not preconditioned; we use the prior's covariance matrix instead.}

\revision{The acceptance rates for the first two algorithms in Table \ref{alg} were set at $25\%$. For the 3rd and 4th algorithms, acceptance rates were set to $30\%$ to $50\%$.
As suggested by \cite{roberts1998} and \cite{MR3849342}, the 5th through 7th algorithms, the acceptance probabilities were set to approximately $60\%$. The 8th algorithm was tuned in two steps. First, we set the number of leapfrog steps to $1$ and tune the leapfrog step size so that the acceptance rate is between $60\%$ and $80\%$ according to \cite{Beskos_2013}. Then we increase the number of leapfrog steps until the time-noramlised effective sample size decreases. The tuning parameters of the Hamiltonian Monte Carlo algorithm were controlled using \textit{rstan} package. As a quantitative measure of efficiency, we used the effective sample size of log-likelihood per second. It was estimated using the package \textit{coda} in R \citep{coda}.}

\revision{
The effective log-likelihood sample sizes per second are shown in Figure \ref{figb}. The box plot is constructed by fifty independent simulations for each algorithm. 
The 5th to 7th algorithms, which are Langevin diffusion based algorithms, show the worst performance. 
Due to the high cost of derivative evaluation, the Hamiltonian Monte Carlo and the infinite dimensional Hamiltonian Monte Carlo are still worse than the random-walk Metropolis kernel. The random-walk Metropolis kernel and the preconditioned Crank--Nicolson kernel are better than gradient-based kernels, 
but the mixed preconditioned Crank--Nicolson kernel is much better. The $\Delta$-guided version is even better than the non-$\Delta$-guided version thanks to the non-reversible property. 
A trace plot is also shown in the Figure \ref{fig14}, it illustrates that the Hamilton Monte Carlo method has a good performance per iteration, but the cost is high compared to other algorithms. 
}

%For simplicity, in this example we only consider a roughly symmetric distribution and so we did not need to use covariance estimation. 
%If we use covariance estimation, all Gaussian reference kernels will be better and the guided method can be even better since the direction statistic will be more efficient. 

% Preconditioned Crank--Nicolson kernel was as worse as random-walk metropolis, because the posterior distribution's variance is much smaller than the prior distribution. As for $\infty$-MALA and marginal gradient-based sampling, they improve the performance of Preconditioned Crank--Nicolson kernel significantly, but the expensive computing cost for derivative make them less competitive. The $\Delta$-guided mixed preconditioned Crank–Nicolson can be more than 6 times better than the gradient based Preconditioned Crank--Nicolson method. As we expected, the $\Delta$-guided mixed preconditioned Crank–Nicolson is twice better than the mixed preconditioned Crank–Nicolson due to the non-reversible property. Since the choice for direction in the $\Delta$-guided mixed preconditioned Crank–Nicolson is also effected by the covariance matrix, if we tuning the covariance matrix carefully, then gradient based Preconditioned Crank--Nicolson method could be much more better than the current version.

\begin{figure*}[hbt!]
	\centering
	\includegraphics[width=0.9\textwidth]{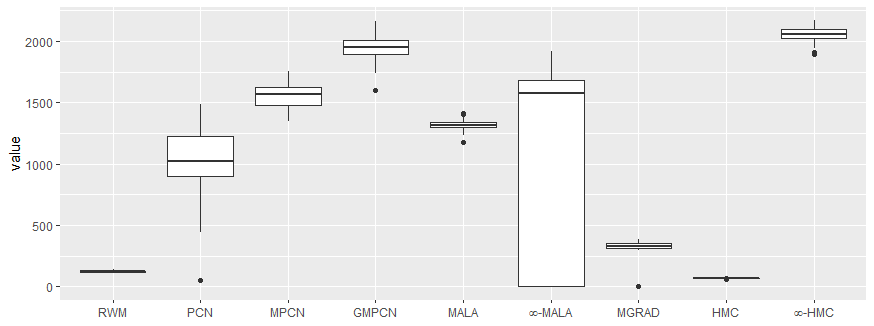}
	\caption{Effective sample sizes of log-likelihood per second in logistic regression example in Section \ref{logistic} for the nine Markov kernels listed in Table \ref{alg}}
	\label{fig13}
\end{figure*}

\begin{figure*}[hbt!]
	\centering
	\includegraphics[width=0.85\textwidth]{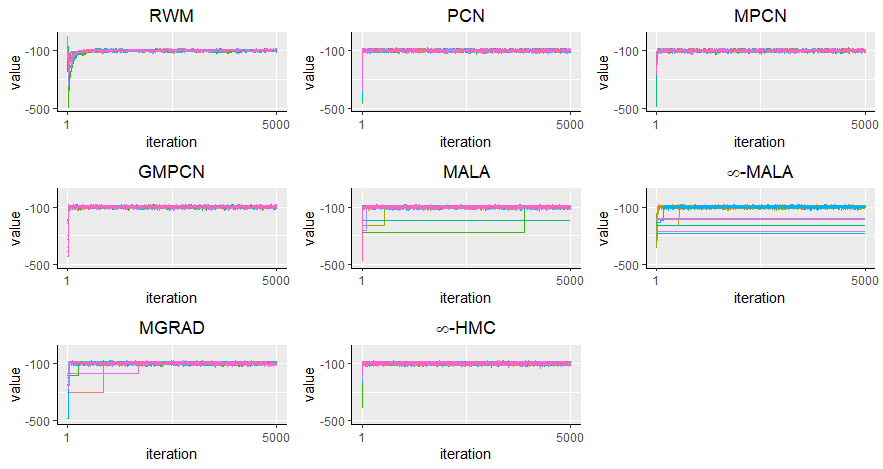}
	\caption{Sampling paths of the logistic regression example illustrated in Section \ref{logistic}. The Hamilton Monte Carlo algorithm is excluded from this simulation because the initial values of the algorithm are automatically selected in the \textit{rstan} package. }
	\label{fig14}
\end{figure*}
% \begin{figure*}[hbt!]
% 	\centering
% 	\includegraphics[width=0.85\textwidth]{t45.png}
% 	\caption{Sampling paths of the logistic regression example illustrated in Section \ref{logistic}. The Hamilton Monte Carlo algorithm is excluded from this simulation because the initial values of the algorithm are automatically selected in the \textit{rstan} package. }
% 	\label{fig14}
% \end{figure*}

\subsubsection{Logistic regression}
\label{logistic}
Next we apply them to a logistic regression model with the \revision{Sonar data set} from
the University of California, Irvine repository \citep{Dua:2019}. The data set contains 208 observations and 60 explanatory variables. The prior distribution is $\mathcal{N}(0,10^2)$ for each parameters. We use a relatively large variance of the normal distribution because we did not have enough prior information at this stage.

\revision{Estimation of the preconditioning matrix is necessary for this problem due to the existence of a strong correlation between the variables.  We performed $2.0\times 10^5$ iterations to estimate $\mu_0$ and estimate the preconditioning matrix $\Sigma_0$ using the empirical means}.  Then we  ran $10^5$ iterations for each algorithm, discarding the $2\times 10^4$ iterations as burn-in. Furthermore, we ran each experiment for 50 times using different seeds. 
We evaluate the effective sample size of log-likelihood per second, and present the results of all the algorithms by boxplots (Figure \ref{fig13}).  
The algorithms based on the Lebesgue measure (1, 5, 8th algorithms in Table \ref{alg}) are relatively worse than other algorithms based on the Gaussian reference measure. The performances of the gradient-based algorithms are divergent, which might reflect the sensitivity of the gradient-based algorithms, which is well described in the \cite{Chopin_2017}. In particular, the infinite dimensional Hamiltonian Monte Carlo algorithm shows the better performance in this case, although it shows poor performance in the previous simulation. The $\Delta$-guided mixed preconditioned Crank--Nicolson kernel was slightly worse than infinite dimensional Hamiltonian Monte Carlo algorithm and better than all other algorithms. The Metropolis--Haar and $\Delta$-guided Metropolis--Haar kernels show good and robust results for the two simulation experiments. 

\revision{
We also investigate the sensitivity of the gradient-based algorithms for the same model as displayed in  Figure \ref{fig14}. In this example, $10$ initial values are randomly generated from a multivariate normal distribution for each algorithm. The number of iteration of each algorithm is $5 \times 10^3$. The paths of the gradient-based algorithms depend strongly on the initial values with the exception of the infinite dimensional Hamiltonian Monte Carlo algorithm. 
}

\subsubsection{\revision{Sensitivity} of the choice of $x_0$}
\label{5.1.3}

To illustrate the importance of $x_0$, we additionally run a numerical experiment on a $50$-dimensional multivariate central $t$-distribution with degrees of freedom $\nu=3$ and identity covariance matrix (\citealp[1p]{MR2038227}). The first element of $x_0$ is $\xi\ge 0$ and all the other elements are set to be zero. When $\xi$ is large, then the direction is less important for increasing or decreasing the likelihood. We run the algorithms on the target distribution for $10^5$ iterations. The experiment showed that the benefit of non-reversibility diminishes as the importance of the direction shrinks (Table \ref{tab2}). 
\begin{table*}[ht!]
    \centering
	\caption{Effective sample sizes of log-likelihood per second target on a 50-dimensional student distribution in Section \ref{5.1.3}}
    \label{tab2}
	\ra{1}
	\begin{tabular}{@{}cccccccccrrr@{}}
		
		& $\xi=0 $ & $\xi=10^{-3}$ & $\xi=10^{-2}$ & $\xi=10^{-1}$ 
		 & $\xi=1$ & $\xi=10$   \\
		\textsc{mpcn} &378.19&96.23&94.74&93.52&95.33&46.31 \\
		\textsc{gmpcn} &4245.43&116.29&114.78&115.2&117.20&40.20
	\end{tabular}

\end{table*}

\subsection{\revision{$\Delta$-guided Metropolis--Haar  kernels on $\mathbb{R}_+^d$}}
\label{sec:additive}

\revision{
Next, we consider the Beta-Gamma based kernels considered in Example \ref{ex:beta-gamma-guide} and the Chi-squared based kernels considered in example \ref{ex:chisq-quide} with $L=1$. 
Thus, we consider a total of six Markov kernels. These are the Metropolis kernel, the Metropolis--Haar kernel, and the $\Delta$-guided Metropolis--Haar kernel for each of the Beta-Gamma based and Chi-squared based kernels. }

\revision{
Our goal is not to compare the Beta-Gamma based kernels and the Chi-squared based kernels, but to compare the guided kernels and the non-guided kernels. In this simulation, we illustrate the difference in behaviour between the guided Metropolis kernel and other kernels by plotting trajectories in two dimensions. 
}

We consider a Poisson hierarchical model of the form
$$
x_{m,n}|\theta_m \sim \mathrm{Poisson}(\theta_m)\quad n=1,\ldots, N \\
$$
$$
\theta_m \sim \mathcal{G}(\alpha,\beta)\quad m=1,\ldots, M \\
$$
$$
\alpha \sim \mathcal{G}(1/20,1/20), \quad \beta \sim \mathcal{G}(1/20,1/20),
$$ where $x=\{x_{m,n}:m=1,\dots,M, n=1,\dots N\}$ is the observation. In our simulations we set $M=25$ and $N=5$. The number of unknown parameters is $M+2=27$ in this case. The parameter 
$\theta=(\theta_1,\ldots,\theta_M)$ has a closed form conditional distribution
$$
\theta_m|\alpha,\beta,x \sim \mathcal{G}\left(\sum_{n=1}^N x_{m,n}+\alpha,N+\beta\right) \quad m=1,\ldots,M.
$$
Therefore we can use the Gibbs sampler for generating the parameter $\theta$. 
On the other hand, since the conditional distribution of $\alpha, \beta$ is complicated, we apply Monte Carlo algorithms mentioned above. 
% Specifically, we imply  product of Chi-squared kernels，multivariate Beta-Gamma kernel and their Harer mixer versions, guided versions for the Metroplis Hasting kernel.
% \begin{equation*}
% 	 C~x_1^{11}x_2^9\exp\left\{-2x_1-0.5x_2-0.05~(x_1x_2)^{1/2}\right\}~\dif x_1\dif x_2\ (x_1, x_2\ge 0)
% \end{equation*}
% \song{I change the target distribution here, since when I try to compare the Metropolis kernel, mixture version, guided version. In the previous example, the difference between mixture version and guided version is not remarkable. However, in current example mixture version is just slightly different with Metropolis, but the guided version's trace is much better than others.}
% \begin{equation*}
%     \pi(x_1,x_2)\propto\frac{x_1^{1/2}}{x_2^5}\exp\left(-\frac{x_1+1}{x_2}\right)\ (x_1, x_2>0). 
% \end{equation*}
%\kengo{$y?$ What is the motivation behind this example? Anyway, the figures are great. }
We created two-dimensional trajectory plots to illustrate the difference in behavior between the Metropolis--Haar kernel and its $\Delta$-guided version. 
The tuning parameters are chosen so that the average acceptance probabilities are $30\%$--$40\%$ in $50,000$ iterations. 
Figure \ref{fig1} shows the trace plots of the last $300$ iterations for the kernels. One can clearly see the larger variation for the guided kernels. Thanks to the incident variables, the guided kernel maintains its direction when the proposed value is accepted. The property of maintaining direction has greatly contributed to the increase in variability. 

\begin{table*}[hbt!]
\centering
	\caption{Description of Markov kernels in Figure \ref{fig1} in Section \ref{sec:additive}.}
	\label{alg2}
	\ra{1}
	\begin{tabular}{ll}
		\textsc{mh}& Metropolis\\
		\textsc{mhh}& Metropolis with Haar-mixture kernel\\
		\textsc{gmh}& Guided Metropolis\\
	\end{tabular}
\end{table*}

\begin{figure*}[hbt!]
	\centering
	\includegraphics[width=0.75\textwidth]{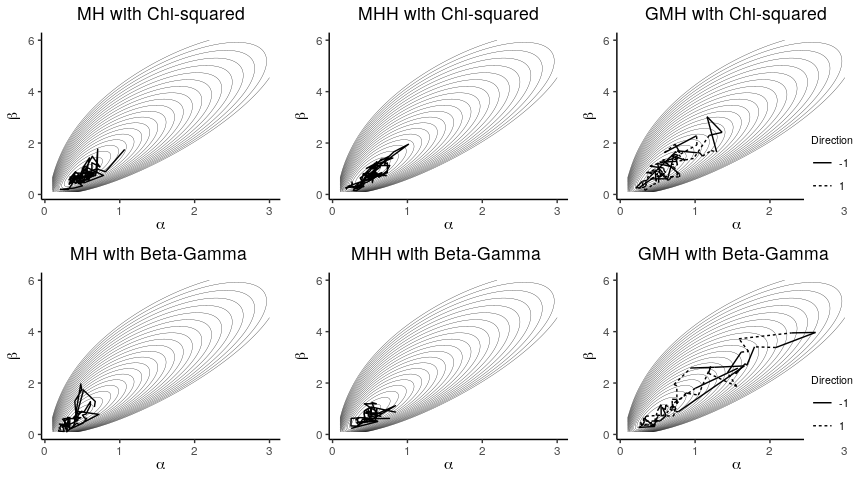}
	\caption{Trace plots of the Metropolis kernels in Section \ref{sec:additive}. The guided kernels (the right figures) are more variable compared to their non-guided counterparts where the solid line corresponds to the negative direction and the dashed line corresponds to the positive direction. }
	\label{fig1}
\end{figure*}

% \kengo{Figure title should be: MH with Chi-squared, MHH  with Chi-squared, 
%GMH  with Chi-squared, MH with Beta-Gamma, etc.}

% \kengo{The tuning parameters were selected so that the acceptance probabilities are around $10\%$?}
% \song{Currently the parameters were selected to ensure the accept rate around 25\% -30\%, which also makes highest ESS.}

\section{Discussion}
\label{sec:discussion}

The theory and application of non-reversible Markov kernels have been under active development recently, but there still exists a gap between the two.
In order to close this gap, we have described how to construct a non-reversible Metropolis kernel on a general state space.   We believe that the method we propose can make non-reversible kernels more attractive. 

As a by-product, we have constructed the Metropolis--Haar kernel. The Haar-mixture kernel imposes a new state globally by using the random walk on a group, whereas other recent Markov chain Monte Carlo methods use local topological information derived from target densities. We believe that this sheds new light on the proposed gradient-free, global topological approach. A combination of the global and local (gradient-based) approaches is an area of further research.

In this paper, we have not discussed geometric ergodicity, although ergodicity is clear under appropriate regularity conditions. A popular approach for proving geometric ergodicity is based on the establishment of a Foster-Lyapunov-type drift condition, which requires kernel-specific arguments. On the other hand, our motivation is to build a general framework for the non-reversible Metropolis kernels. Therefore, we did not focus on geometric ergodicity. A more in-depth study should be carried out in that direction. See \cite{MR3668488} for geometric ergodicity of the mixed preconditioned Crank--Nicolson kernel. 

Finally, we would like to remark that 
the $\Delta$-guided Metropolis--Haar kernel is not limited to $\mathbb{R}^d$ or $\mathbb{R}_+^d$. It is possible to construct the kernel on the $p\times q$-matrix space and the symmetric $q\times q$ positive definite matrix space, where $p, q$ are any positive integers. 
$\Delta$-guided Metropolis--Haar kernels for other state spaces are future work.

\section*{Acknowledgements}
Kamatani is supported by JSPS KAKENHI Grant Number 20H04149 and JST CREST Grant Number JPMJCR14D7. Song is supported by the Ichikawa International Scholarship Foundation. We thank Sam Power for helpful comments.

\bibliographystyle{plainnat}

\end{document}